\newtheorem{lemma}{Lemma}[section]
\newtheorem{theorem}[lemma]{Theorem}
\newtheorem{definition}[lemma]{Definition}
\begin{document}

\title{A New Approximation Algorithm for Minimum-Weight $(1,m)$--Connected Dominating Set}
\author{\footnotesize Jiao Zhou$^1$
        \quad Yingli Ran$^1$
        \quad Panos M. Pardalos$^2$
        \quad Zhao Zhang$^1$\thanks{Corresponding Authors: Z. Zhang, hxhzz@sina.com}
        \quad Shaojie Tang$^3$
        \quad Ding-Zhu Du$^4$\\ \\
    {\it\small $^1$ School of Mathematical Sciences, Zhejiang Normal University}\\
    {\it\small Jinhua, Zhejiang, 321004, China}\\
    {\it\small $^2$ Department of Industrial and Systems Engineering, University of Florida}\\
    {\it\small Gainesville, Florida 32611, USA}\\
    {\it\small $^3$ Naveen Jindal School of Management, University of Texas at Dallas}\\
    {\it\small Richardson, Texas 75080, USA}\\
    {\it\small $^4$ Department of Computer Sciences, University of Texas at Dallas}\\
    {\it\small Richardson, Texas 75080, USA}}
\date{}
\maketitle

\begin{abstract}
\noindent Consider a graph with nonnegative node weight. A vertex subset is called a CDS (connected dominating set) if every other node has at least one neighbor in the subset and the subset induces a connected subgraph. Furthermore, if every other node has at least $m$ neighbors in the subset, then the node subset is called a $(1,m)$CDS. The minimum-weight $(1,m)$CDS problem aims at finding a $(1,m)$CDS with minimum total node weight.
In this paper,  we present a new polynomial-time approximation algorithm for this problem
with approximation ratio $2H(\delta_{\max}+m-1)$, where $\delta_{\max}$ is the maximum degree of the given graph and $H(\cdot)$ is the Harmonic function, i.e., 
$H(k)=\sum_{i=1}^k \frac{1}{i}$.
\\
{\bf Keywords}:
minimum-weight connected $m$-fold dominating set, approximation algorithm.
\end{abstract}

\section{Introduction}
For a graph $G=(V,E)$, where $V$ is the node set and $E$ is the edge set, a node subset $C$ is a  {\em dominating set} (DS) of $G$ if any $v\in V\setminus C$ has at least one neighbor in $C$. A dominating set $C$ is a {\em connected DS} (CDS) of $G$ if $G[C]$ is connected, where $G[C]$ is the subgraph of $G$ induced by $C$. The nodes in $C$ are called {\em dominators}, and those in $V\setminus C$ are called {\em dominatees}. The {\em minimum CDS} (MinCDS) problem aims to find a CDS with the minimum cardinality/weight.

MinCDS has wide applications in many fields, including computer science, engineering, and operations research. For example, in a wireless sensor network (WSN), CDSs serve as virtual backbones \cite{Das,Ephremides}, they can save energy and reduce interference while maintaining information sharing.

The sensors in a WSN are prone to failures due to accidental damage or battery depletion. Therefore a fault-tolerant virtual backbone should be maintained. The {\em minimum $k$-connected $m$-fold CDS} (Min$(k,m)$CDS) problem was proposed for this purpose \cite{Dai}. A node subset $C$ is a $(k,m)$CDS if every node in $V\setminus C$ has at least $m$ neighbors in $C$ and the induced subgraph $G[C]$ is $k$-connected.

MinCDSs have been extensively studied, especially in unit disk graphs, UDGs, a widely adopted model of homogeneous WSNs. When nodes have nonnegative weights, the node-weighted versions, namely, minimum weight CDSs (MinWCDSs), have also achieved significant progress in UDGs \cite{Christoph,Erlebach,Huang,Li3}. However, studies on MinWCDSs in general graphs are in a different situation. In 1999, Guha and Khuller \cite{Guha1} designed a $(1.35+\varepsilon)\ln n$-approximation algorithm for MinWCDSs, where $n$ is the number of nodes. In real applications, $\delta_{\max}$ might be much smaller than $n$ where $\delta_{\max}$ is the maximum degree of the input graph. Therefore, one usually expects to replace
$\ln n$ by $\ln \delta_{\max}$.
However, this expectation became a long-standing open problem. In fact, techniques provided in \cite{Guha1}
do not have enough power to do so. Until 2018, with discovery of different techniques, Zhou et al. \cite{Zhou3} presented an $(H(\delta_{\max}+m)+2H(\delta_{\max}-1))$-approximation algorithm for the minimum-weight $(1,m)$CDS (MinW$(1,m)$CDS) problem, where $H(\cdot)$ is the Harmonic function, i.e., $H(k)=\sum_{i=1}^k \frac{1}{i}\leq \ln k+1$ (however, there is a flaw in this work, please see discussion in Section \ref{sec.5})

In this paper, using a completely new idea of analysis, we design a new algorithm for the MinW$(1,m)$CDS problem in a general graph to achieve approximation ratio $2H(\delta_{\max}+m-1)$. Note that our ratio is better than that in \cite{Zhou3} even if its flaw can be fixed.

\section{Preliminaries}\label{secPreliminaries}

We first give a formal definition of the problem and  some preliminary results.

Let $G=(V,E)$ be a connected graph and $C$ be a node subset of $V$. Denote by $G[C]$ the subgraph of $G$ induced by $C$, $N_C (u)$ the set of neighbors of $u$ in $C$, $N(u)=N_V(u)$, and $deg(u)=\mid N(u)\mid $. For $C\subseteq V$, $N(C)=(\bigcup_{u\in C}N(u))\setminus C$ denotes the open neighborhood of $C$.

The formal definition of the MinW$(k,m)$CDS problem is as follows.

\begin{definition}[the minimum weight $k$-connected $m$-fold dominating set (MinW$(k,m)$CDS) problem]
{\rm Let $G$ be a connected graph on node set $V$ and edge set $E$, $k$ and $m$ be two positive integers, and $c:V\rightarrow R^{+}$ be a cost function on the nodes. A node subset $C\subseteq V$ is a $(k,m)$CDS if every node in $V\setminus C$ is adjacent to at least $m$ nodes of $C$, and $G[C]$, the subgraph of $G$ induced by $C$, is $k$-connected (that is, $G[C]$ remains connected after removing at most $k-1$ nodes). The MinW$(k,m)$CDS problem aims to find a $(k,m)$CDS with the minimum cost, where the cost of node set $C$ is $c(C)=\sum_{v\in C}c(v)$.}
\end{definition}

A set function $f:2^V\rightarrow \mathbb R^+$ is {\em monotone nondecreasing} if $f(A)\leq f(B)$ for any $A\subseteq B\subseteq V$; it is {\em submodular} if $f(A\cup B)+f(A\cap B)\leq f(A)+f(B)$ for any $A,B\subseteq V$. For node sets $A,B\subseteq V$, let
$$
\Delta_{A}f(B)=f(A\cup B)-f(B)
$$
be the {\em marginal profit of $A$ over $B$}. The following results are well-known properties for monotone and submodular functions (see, for example, \cite{Du1}).

\begin{lemma}\label{lem0413-1}
A set function $f$ is monotone nondecreasing if and only if $\Delta_{u}f(A)\geq 0$ holds for any $A\subseteq V$ and $u\in V$; it is submodular if and only if $\Delta_uf(A)\geq\Delta_{u}f(B)$ holds for any $A\subseteq B\subseteq E$ and $u\in V\setminus B$; it is monotone nondecreasing and submodular if and only if $\Delta_uf(A)\geq\Delta_{u}f(B)$ holds for any $A\subseteq B\subseteq E$ and $u\in V$.
\end{lemma}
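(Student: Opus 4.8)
The plan is to prove the three equivalences one at a time. In each case the forward direction follows by specializing the definition, while the reverse direction is obtained by a telescoping argument along a chain of sets obtained by inserting elements one at a time. Throughout I write $\Delta_u f(A)=f(A\cup\{u\})-f(A)$ and work with subsets of $V$.

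For the monotonicity characterization, the forward direction is immediate: if $f$ is monotone nondecreasing, then $A\subseteq A\cup\{u\}$ forces $\Delta_u f(A)=f(A\cup\{u\})-f(A)\geq 0$. For the reverse, given $A\subseteq B$, I would enumerate $B\setminus A=\{u_1,\ldots,u_t\}$ and insert these elements one at a time, expressing $f(B)-f(A)$ as the telescoping sum $\sum_{i=1}^{t}\Delta_{u_i}f(A\cup\{u_1,\ldots,u_{i-1}\})$; since each summand is nonnegative by hypothesis, $f(A)\leq f(B)$.

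For the submodularity characterization, the forward direction applies the submodular inequality to the pair $A\cup\{u\}$ and $B$, where $A\subseteq B$ and $u\notin B$: their union is $B\cup\{u\}$ and, since $A\subseteq B$ and $u\notin B$, their intersection is exactly $A$, so $f(B\cup\{u\})+f(A)\leq f(A\cup\{u\})+f(B)$ rearranges to the diminishing-returns inequality $\Delta_u f(A)\geq\Delta_u f(B)$. The reverse direction is the main step, and I expect it to be the one genuine obstacle; I would handle it with a parallel two-track telescoping argument. Writing $A\setminus B=\{u_1,\ldots,u_p\}$, I compare the upper chain $B\subseteq B\cup\{u_1\}\subseteq\cdots\subseteq A\cup B$ with the lower chain $A\cap B\subseteq (A\cap B)\cup\{u_1\}\subseteq\cdots\subseteq A$. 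At each step the lower-chain set is contained in the upper-chain set and the inserted element $u_i$ lies outside the upper set, so the hypothesis yields $\Delta_{u_i}f\big((A\cap B)\cup\{u_1,\ldots,u_{i-1}\}\big)\geq\Delta_{u_i}f\big(B\cup\{u_1,\ldots,u_{i-1}\}\big)$; summing these telescopes to $f(A)-f(A\cap B)\geq f(A\cup B)-f(B)$, which is precisely submodularity.

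Finally, for the combined characterization I would simply assemble the first two. If $f$ is monotone and submodular, then for $A\subseteq B$ and $u\in V$ the case $u\notin B$ is already the submodularity inequality, while if $u\in B$ then $\Delta_u f(B)=0$ and $\Delta_u f(A)\geq 0$ by monotonicity, so the inequality persists. Conversely, if $\Delta_u f(A)\geq\Delta_u f(B)$ holds for all $A\subseteq B$ and all $u\in V$, then restricting to $u\notin B$ gives submodularity, and taking $B=A\cup\{u\}$ gives $\Delta_u f(A)\geq\Delta_u f(A\cup\{u\})=0$, hence monotonicity. Apart from the two-track telescoping in the reverse submodularity direction, every step is a direct unwinding of the definitions.
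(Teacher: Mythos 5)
Your proof is correct in all three parts: the specialization arguments for the forward directions, the single-chain telescoping for monotonicity, and the parallel two-chain telescoping for the reverse submodularity direction (the key identities $(A\cap B)\cup(A\setminus B)=A$ and $B\cup(A\setminus B)=A\cup B$, together with the observation that each inserted $u_i$ lies outside the upper-chain set, are exactly what make the summed inequalities telescope to $f(A)-f(A\cap B)\geq f(A\cup B)-f(B)$). The comparison with the paper, however, is somewhat degenerate: the paper does not prove this lemma at all. It presents the statement as a well-known fact about monotone and submodular functions and simply cites the textbook of Du, Ko and Hu for it. So your argument is necessarily a ``different route'' in the sense that it is a complete, elementary, self-contained derivation of a result the paper treats as folklore; what it buys is independence from the reference, at the cost of about a page of standard material. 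Your proof is indeed the standard one, so there is no mathematical discrepancy to reconcile. One small point worth flagging: the lemma as printed contains a typo --- the conditions read $A\subseteq B\subseteq E$, but since $f$ is a set function on $2^V$ this should be $A\subseteq B\subseteq V$; you silently (and correctly) worked with subsets of $V$ throughout, which is the intended reading.
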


The following is a property of a submodular function.

\begin{lemma}\label{lem0414-1}
If $f:2^V\mapsto\mathbb R^+$ is a submodular function, then for any subsets $A,B\subseteq V$,
$$
\Delta_Bf(A)\leq\sum_{v\in B\setminus A}\Delta_vf(A).
$$
\end{lemma}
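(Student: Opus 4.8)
The plan is to prove the inequality by a standard telescoping argument combined with the marginal-decreasing characterization of submodularity already recorded in Lemma~\ref{lem0413-1}. The first observation is that only the part of $B$ lying outside $A$ contributes, since $A\cup B=A\cup(B\setminus A)$; hence
$$
\Delta_Bf(A)=f(A\cup B)-f(A)=f\big(A\cup(B\setminus A)\big)-f(A).
$$
I would therefore work directly with the set $B\setminus A$ and add its elements to $A$ one at a time.

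Concretely, I would enumerate $B\setminus A=\{v_1,v_2,\ldots,v_t\}$ and build the increasing chain $A_0=A$ and $A_i=A_{i-1}\cup\{v_i\}$ for $i=1,\ldots,t$, so that $A_t=A\cup(B\setminus A)=A\cup B$. Telescoping the differences along this chain gives
$$
f(A\cup B)-f(A)=\sum_{i=1}^{t}\big(f(A_i)-f(A_{i-1})\big)=\sum_{i=1}^{t}\Delta_{v_i}f(A_{i-1}).
$$
The key structural point to check is that $v_i\notin A_{i-1}$: indeed $v_i\in B\setminus A$ forces $v_i\notin A$, and distinctness of the enumeration forces $v_i\neq v_j$ for $j<i$, so $v_i\notin A_{i-1}=A\cup\{v_1,\ldots,v_{i-1}\}$. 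This is exactly the hypothesis needed to invoke submodularity in the next step.

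Now, since $A=A_0\subseteq A_{i-1}$ and $v_i\notin A_{i-1}$, the submodular characterization from Lemma~\ref{lem0413-1} (that $\Delta_uf$ is nonincreasing as the base set grows, for $u$ outside that set) yields $\Delta_{v_i}f(A_{i-1})\leq\Delta_{v_i}f(A)$ for each $i$. Substituting this bound term by term into the telescoped sum gives
$$
\Delta_Bf(A)=\sum_{i=1}^{t}\Delta_{v_i}f(A_{i-1})\leq\sum_{i=1}^{t}\Delta_{v_i}f(A)=\sum_{v\in B\setminus A}\Delta_vf(A),
$$
which is the claimed inequality. I do not expect any serious obstacle here: the argument is elementary, and the only point requiring care is the verification that each newly added element $v_i$ lies outside the current base set $A_{i-1}$, so that Lemma~\ref{lem0413-1} applies at every step of the chain.
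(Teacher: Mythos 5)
Your proof is correct. Note that the paper itself gives no proof of this lemma: it is stated as a well-known property of submodular functions (with \cite{Du1} cited just above for such facts), so there is nothing to compare against. Your telescoping argument --- ordering $B\setminus A$ as $v_1,\ldots,v_t$, writing $\Delta_Bf(A)=\sum_{i=1}^t\Delta_{v_i}f(A_{i-1})$ along the chain $A_i=A\cup\{v_1,\ldots,v_i\}$, and bounding each term by $\Delta_{v_i}f(A)$ via the marginal-decreasing characterization in Lemma~\ref{lem0413-1} --- is exactly the standard proof of this folklore fact, and your care in checking $v_i\notin A_{i-1}$ is the right point to verify.
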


\section{Main Results}\label{secAA}

Let us first describe our algorithm and then give analysis.

\subsection{Algorithm}\label{sec.3}

The algorithm uses a greedy strategy. In each iteration, it selects a most cost-effective star. The cost-effectiveness of a star depends on a potential function $g$ designed as follows.

For a node subset $C\subseteq V$ and a node $u\in V$, define
\begin{align*}
& q_{C}(u)=\left \{
\begin{array}{llll}
\max\{0,m-\mid N_{C}(u)\mid \},&&&\mbox   {$u\in V\setminus C$,}\\
0,&&&\mbox{$u\in C$. }
\end{array}
\right.\\
& q(C)=\sum_{u\in V\setminus C}q_{C}(u),\\
& p(C)=\mbox{the number of components of $G[C]$},\\
& f(C)=p(C)+q(C).
\end{align*}
For a node set $U\subseteq V\setminus C$, denote by $NC_{C}(U)$ the
set of components of $G[C]$ which are adjacent to $U$.
Every component in $NC_{C}(U)$ is called a {\em component neighbor}
of $U$ in $C$ (if a component of $G[C]$ has nonempty intersection with $U$, then it is also viewed as a component neighbor of $U$). For a node $u\in V$, we use $S_u$ to denote some star
with center $u$, that is, $S_u$ is a subgraph of $G$ induced by some
edges between node $u$ and some of $u$'s neighbors. In particular, a
single node is a {\em trivial star}. In the following, we treat
$S_u$ as a star as well as the set of nodes in the star. For a node
set $C$, a node $u\in V\setminus C$, and a star $S_u$, suppose
$S_u\setminus\{u\}=\{u_{1},\ldots,u_{s}\}$ has $c(u_1)\leq
\cdot\cdot\cdot\leq c(u_s)$, define
\begin{equation}\label{eq3-15-3}
b_{C}^{S_{u}}(u_{i})=\left \{
\begin{array}{ll}
0,&\mbox   {$q_C(u_i)>0$,}\\
\min\{1,-\Delta_{u_{i}}f(C_i)\},&\mbox{$q_C(u_i)=0$ }
\end{array}
\right.
\end{equation}
where $C_i=C\cup\{u,u_{1},\ldots,u_{i-1}\}$. Let
\begin{equation}\label{eq11-24-2}
g_{C}(S_u)=
-\Delta_{u}f(C)+\sum_{i=1}^{s}b_{C}^{S_{u}}(u_{i}).
\end{equation}
The {\em cost-effectiveness} of star $S_u$ with respect to a node set $C$ is defined to be $g_{C}(S_{u})/c(S_{u})$.

Pseudo codes of the main algorithm is presented in Algorithm \ref{alg1}. It iteratively adds a {\em most cost-effective} star to the current set $C$. We shall show latter in Lemma \ref{le11-23-1} that such a star can be found efficiently by Algorithm \ref{alg2}.

\begin{algorithm}[h]
\caption{\textbf{}}
 Input: A connected graph $G=(V,E)$.

 Output: A node set $C$ which is a $(1,m)$-CDS of $G$.
 \begin{algorithmic}[1]
    \State Set $C\leftarrow\emptyset$.
    \While{$\exists$ a star $S_u$ with $g_C(S_u)>0$}
       \State Use Algorithm \ref{alg2} to compute a most cost-effective star $S_{u}=\arg\max\limits_{S_u\subseteq V\setminus C}\frac{g_C(S_u)}{c(S_u)}$.
       \State $C \leftarrow C\cup S_{u}$
    \EndWhile
    \State Output  $C$.
\end{algorithmic}\label{alg1}
\end{algorithm}

\begin{algorithm}[h]
\caption{\textbf{}}
Input: A connected graph $G=(V,E)$, a node set $C\subseteq V$.

Output: A most cost-effective star $S_u$ with respect to $C$.
 \begin{algorithmic}[1]
    \For {each $u\in V\setminus C$}
        \State $S_{u}\leftarrow\{u\}$
        \If{ $q_{C}(u)=0$}
            \State $N_{u}\leftarrow$ the set of nodes in $N(u)$ satisfying $(\romannumeral3)$ and $(\romannumeral4)$ of Lemma \ref{le11-23-1}.
            \State Order the nodes in $N_{u}$ as $u_{1},\ldots,u_{s}$ such that $c(u_{1})\leq \cdots\leq c(u_{s})$.
            \For{$j=1,\ldots,s$}
                \State If $b_C^{S_u}(u_j)=1$ and $\frac{1}{c(u_j)}
\geq\frac{g_{C}(S_{u})}{c(S_{u})}$, then $S_u\leftarrow S_u\cup \{u_{j}\}$. \label{line0330-4}

            \EndFor
        \EndIf
    \EndFor
    \State Output $S_u\leftarrow \arg\max\{g_C(S_u)/c(S_u):u\in V\setminus C\}$, giving priority to trivial star.
\end{algorithmic}\label{alg2}
\end{algorithm}

\subsection{Finding a Most Cost-Effective Star}

Before showing how to find out a most cost-effective star, we first give some properties for functions $p,q$ and $f$.

\begin{lemma}\label{lem0330-1}
Set functions $-q(C)$, $-p(C)$ and $-f(C)$ satisfy the following properties.

$(a)$ $-q(C)$ is monotone nondecreasing and submodular;

$(b_1)$ for any node set $C$ and node $u\not\in C$, $-\Delta_up(C)\geq -1$, equality holds if and only if $u$ is not adjacent with $C$;

$(b_2)$ for any connected node set $C'$ and any node $u\in V\setminus (C\cup C')$, we have
$$
-\Delta_up(C\cup C')\leq -\Delta_up(C)+1,
$$
equality holds only when $G[C']$ is not adjacent with $G[C]$ and node $u$ is adjacent with $G[C']$;

$(c)$ $-f(C)$ is monotone nondecreasing.
\end{lemma}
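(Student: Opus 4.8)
The plan is to analyze each set function through its single‑node marginals $\Delta_u(\cdot)(C)$ and then invoke the characterizations in Lemma \ref{lem0413-1}. The common first step is to write explicit formulas for $\Delta_u q(C)$ and $\Delta_u p(C)$ when one node $u\notin C$ is added.

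For part (a), I would first establish, for $u\notin C$,
\begin{equation*}
\Delta_u q(C) = -q_C(u) - \bigl|\{w \in N(u)\setminus(C\cup\{u\}) : q_C(w) \geq 1\}\bigr|,
\end{equation*}
by accounting separately for the vanishing of $u$'s own residual demand $q_C(u)$ and for the drop by one in the residual demand of each still‑deficient neighbor of $u$. This expression is manifestly $\leq 0$, so $-q$ is monotone nondecreasing. For submodularity I would show $\Delta_u q(A) \leq \Delta_u q(B)$ whenever $A\subseteq B$: the residual demand of $u$ satisfies $q_A(u)\geq q_B(u)$, so $-q_A(u)\leq -q_B(u)$, and the set of still‑deficient neighbors counted in the second term for $B$ is contained in that for $A$ (a neighbor outside $B$ is outside $A$, and $q_B(w)\geq 1$ forces $q_A(w)\geq 1$). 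Both effects push $\Delta_u q$ upward as the base set grows, which is exactly the submodularity criterion of Lemma \ref{lem0413-1} applied to $-q$, and it simultaneously re‑confirms monotonicity.

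For (b$_1$) and (b$_2$) I would use the basic fact that adjoining $u$ to $C$ either creates a new singleton component (when $u$ has no neighbor in $C$) or fuses the $t$ components of $G[C]$ adjacent to $u$ into a single one; hence $\Delta_u p(C)=1-t$ with $t=|NC_C(\{u\})|$, so $-\Delta_u p(C)=t-1\geq -1$ with equality exactly when $t=0$, i.e. $u$ is not adjacent to $C$. This gives (b$_1$). For (b$_2$) I would apply the same formula to both $C$ and $C\cup C'$, reducing the claim to $|NC_{C\cup C'}(\{u\})|\leq |NC_C(\{u\})|+1$. Here the connectedness of $C'$ is essential: adjoining $C'$ collapses all components of $G[C]$ that touch $C'$, together with $C'$ itself, into one component, while components of $G[C]$ nonadjacent to $C'$ persist. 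Counting the component‑neighbors of $u$ after the merge, the surviving components contribute at most $|NC_C(\{u\})|$ and the newly merged component contributes at most one more, yielding the inequality; tracking when both bounds are tight gives the equality condition (the merged component is a component‑neighbor of $u$, and none of $u$'s original component‑neighbors is absorbed into it).

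Finally, for (c) I would combine the two marginals via $\Delta_u f(C)=\Delta_u p(C)+\Delta_u q(C)$. If $u$ is adjacent to $C$ then $t\geq 1$ gives $\Delta_u p(C)\leq 0$, while $\Delta_u q(C)\leq 0$ always, so the sum is $\leq 0$. The only dangerous case is $u$ nonadjacent to $C$, where $\Delta_u p(C)=+1$; but then $|N_C(u)|=0$ forces $q_C(u)=m\geq 1$, so the formula for $\Delta_u q(C)$ yields $\Delta_u q(C)\leq -q_C(u)=-m\leq -1$, and again $\Delta_u f(C)\leq 0$. By Lemma \ref{lem0413-1}, nonpositivity of all single‑node marginals is equivalent to $-f$ being monotone nondecreasing. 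I expect the main obstacle to be the component‑merging bookkeeping in (b$_2$), in particular pinning down the equality case, since one must carefully separate the components of $G[C]$ according to whether they are adjacent to $u$, to $C'$, or to both.
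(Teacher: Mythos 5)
Your proposal is correct and, for parts $(a)$, $(b_1)$, and $(c)$, it is essentially the paper's own proof: the same closed-form marginals $-\Delta_u q(C)=q_C(u)+|\{w\in N(u):q_C(w)>0\}|$ and $-\Delta_u p(C)=|NC_C(u)|-1$ (the paper's \eqref{eq0407-1}), the same term-by-term comparison giving monotonicity and submodularity of $-q$, and the same case split for $(c)$, where non-adjacency of $u$ to $C$ forces $q_C(u)=m\geq 1$ and thus cancels the $+1$ contributed by $p$.

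The one genuine divergence is the equality clause of $(b_2)$, and there your version is the correct one while the clause as stated in the lemma (and argued in the paper's proof) is not. You characterize equality by: the merged component containing $C'$ is a component-neighbor of $u$, and none of $u$'s original component-neighbors in $G[C]$ is absorbed into it. The lemma instead asserts that equality forces $G[C']$ to be non-adjacent with $G[C]$, i.e., to survive as its own component of $G[C\cup C']$. These are not equivalent: take the path $d_1-u-c_1-d_2$ with $C=\{d_1,d_2\}$ and $C'=\{c_1\}$; then $-\Delta_up(C)=1-1=0$ and $-\Delta_up(C\cup C')=2-1=1$, so equality holds in $(b_2)$, yet $G[C']$ is adjacent with $G[C]$ via $d_2$. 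The paper's proof overlooks that the merged component can become a \emph{new} component-neighbor of $u$ through $C'$ itself even when $C'$ touches $G[C]$, provided the absorbed components of $G[C]$ were not neighbors of $u$. So, strictly, you have not proved the ``only when'' clause as written --- but no correct proof could, and your weaker necessary condition is the right repair. Reassuringly, the error does not propagate: strictness of $(b_2)$ is needed only to sharpen \eqref{eq0330-2} in the proof of Lemma \ref{le11-23-1}, where $C'=\{u_i\}$ and the node added is $u_j$; there your condition also rules out equality, because the component of the current set containing the star center $u$ is a component-neighbor of $u_j$ (as $u_j\sim u$) and is absorbed into the merged component (as $u_i\sim u$), so the improved inequality and everything downstream remain valid.
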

\begin{proof}
For any node sets $C_{1}\subseteq C_{2}\subseteq V$ and node $u\in V$, we have $q_{C_{1}}(u)\geq q_{C_{2}}(u)$. So, $q(C_{1})=\sum_{u\in V\backslash C_{1}}q_{C_{1}}(u)\geq \sum_{u\in V\backslash C_{2}}q_{C_{2}}(u)=q(C_{2})$ and thus $-q$ is monotone nondecreasing. Furthermore, for any $u\in V\setminus C_2$, we have $-\Delta_{u}q(C_{1})=q_{C_{1}}(u)+\mid \{v\in N(u):q_{C_{1}}(v)>0\}\mid \geq q_{C_{2}}(u)+\mid \{v\in N(u):q_{C_{2}}(v)>0\}\mid =-\Delta_{u}q(C_{2})$, and thus $-q$ is submodular. Property $(a)$ is proved.

Note that adding node $u$ into $C$ will merge those components of $G[C]$ which are adjacent with $u$ into one big component of $G[C\cup\{u\}]$. So,
\begin{equation}\label{eq0407-1}
-\Delta_up(C)=\mid NC_C(u)\mid -1\geq -1.
\end{equation}
Inequality becomes equality if and only if $\mid NC_C(u)\mid =0$, which holds if and only if $u$ is not adjacent with $C$. Property $(b_{1})$ is proved.

By equality \eqref{eq0407-1}, $-\Delta_up(C\cup C')-(-\Delta_up(C))=\mid NC_{C\cup C'}(u)\mid -\mid NC_C(u)\mid $. Since $C'$ is a connected node set, adding $C'$ into $C$ will merge those components of $G[C]$ which are adjacent with $C'$ or have nonempty intersection with $C'$ into one component. So, $\mid NC_{C\cup C'}(u)\mid >\mid NC_C(u)\mid $ happens only when $G[C']$ is a component of $G[C\cup C']$ and $u$ is adjacent with $C'$, in which case $\mid NC_{C\cup C'}(u)\mid =\mid NC_C(u)\mid +1$ and  $-\Delta_up(C\cup C')-(-\Delta_up(C))=1$. In all the other cases, we have $-\Delta_up(C\cup C')-(-\Delta_up(C))\leq 0$. Hence property $(b_2)$ is proved.

By properties $(a),(b_1)$ and Lemma \ref{lem0413-1}, we have $-\Delta_uf(C)=-\Delta_uq(C)-\Delta_up(C)\geq -1$, and equality holds only when $-\Delta_uq(C)=0$ and $-\Delta_up(C)=-1$. Since $-\Delta_up(C)=-1$ implies that $u$ is not adjacent with $C$ and thus $q_C(u)=m$, we have $-\Delta_uq(C)\geq q_C(u)>0$. So, $-\Delta_uf(C)\geq 0$ holds for any node set $C$ and any $u\in V\setminus C$, which is equivalent to say that $-f$ is monotone nondecreasing.
\end{proof}

As a corollary of the above lemma, we have the following result.
\begin{lemma}\label{eq0402-6}
Let $C$ be a node set and $S_u$ be a star rooted at $u$. Suppose $S_u\setminus\{u\}=\{u_1,\ldots,u_s\}$ and $c(u_1)\leq c(u_2)\leq \cdots \leq c(u_s)$. For any $i\in\{1,\ldots,s\}$, denote by $prec(u_i)=\{u,u_1,\ldots,u_{i-1}\}$. Then $-\Delta_{u_{i}}f(C\cup prec(u_i))\geq 0$ and equality holds if and only if $-\Delta_{u_{i}}q(C\cup prec(u_i))=-\Delta_{u_{i}}p(C\cup prec(u_i))=0$.
\end{lemma}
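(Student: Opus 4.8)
The plan is to treat this as a direct consequence of Lemma~\ref{lem0330-1}, exploiting the fact that in the setting of the corollary the critical ``bad case'' of property $(b_1)$ simply cannot occur. Writing $f=p+q$, I would first record the additive decomposition $-\Delta_{u_i}f(C\cup prec(u_i))=-\Delta_{u_i}p(C\cup prec(u_i))+\bigl(-\Delta_{u_i}q(C\cup prec(u_i))\bigr)$, so that the whole claim reduces to controlling the two marginal terms separately and then observing that a sum of nonnegative reals is zero exactly when each summand is zero.

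For the $q$-term, part $(a)$ of Lemma~\ref{lem0330-1} says $-q$ is monotone nondecreasing, hence $-\Delta_{u_i}q(C\cup prec(u_i))\ge 0$ with no extra hypotheses. The decisive step is the $p$-term. Since $u_i$ is a leaf of the star $S_u$, it is adjacent to the center $u$; and by construction $u\in prec(u_i)\subseteq C\cup prec(u_i)$. Thus $u_i$ \emph{is} adjacent with $C\cup prec(u_i)$, which is precisely the situation in which the equality case of property $(b_1)$ is ruled out. Consequently $-\Delta_{u_i}p(C\cup prec(u_i))>-1$, and since by \eqref{eq0407-1} this quantity equals $|NC_{C\cup prec(u_i)}(u_i)|-1$ it is an integer, forcing $-\Delta_{u_i}p(C\cup prec(u_i))\ge 0$.

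With both marginals nonnegative their sum $-\Delta_{u_i}f(C\cup prec(u_i))$ is nonnegative, and as a sum of two nonnegative reals it vanishes if and only if each summand vanishes, i.e.\ if and only if $-\Delta_{u_i}q(C\cup prec(u_i))=-\Delta_{u_i}p(C\cup prec(u_i))=0$. The only point I expect to need real care is this adjacency observation for the $p$-term. In the proof of Lemma~\ref{lem0330-1}$(c)$ the case $-\Delta_u p=-1$ had to be handled separately by arguing that it forces $q_C(u)=m>0$, which only yields $-\Delta_u f\ge 0$ rather than a clean equality characterization. The presence of the star center $u$ inside $prec(u_i)$ eliminates that case altogether, so no analogue of that argument is needed and the biconditional comes out cleanly; everything else is the routine splitting above.
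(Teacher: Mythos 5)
Your proposal is correct and follows essentially the same route as the paper: both use monotonicity of $-q$ (Lemma~\ref{lem0330-1}$(a)$) for the $q$-term and property $(b_1)$ together with the adjacency of $u_i$ to the center $u\in prec(u_i)$ to rule out the $-1$ case for the $p$-term, then conclude the equality characterization from nonnegativity of both summands. Your explicit appeal to integrality of $|NC_{C\cup prec(u_i)}(u_i)|-1$ is just a slightly more spelled-out version of the same step the paper leaves implicit.
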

\begin{proof}
By the monotonicity of $-q$, we have $-\Delta_{u_{i}}q(C\cup prec(u_i))\geq0$. Since $prec(u_i)$ is a connected set and $u_{i}$ is adjacent with $prec(u_i)$, by property $(b_1)$ of Lemma \ref{lem0330-1}, we have $-\Delta_{u_{i}}p(C\cup prec(u_i))\geq0$. So, $-\Delta_{u_{i}}f(C\cup prec(u_i))\geq 0$, and $-\Delta_{u_{i}}f(C\cup prec(u_i))=0$ if and only if $-\Delta_{u_{i}}q(C\cup prec(u_i))=-\Delta_{u_{i}}p(C\cup prec(u_i))=0$.
\end{proof}

A simple relation will be used in the proof: for four positive real numbers $a,b,c,d$:
\begin{equation}\label{eq11-23-1}
\frac{a+b}{c+d}\geq \frac{b}{d}\Longrightarrow \frac{a}{c}\geq\frac{a+b}{c+d}\geq\frac{b}{d}.\end{equation}
This is because $\frac{a+b}{c+d}=\frac{b(\frac{a}{b}+1)}{d(\frac{c}{d}+1)}\geq \frac{b}{d}$ implies $\frac{a}{b}\geq \frac{c}{d}$, and then implies $\frac{a+b}{c+d}=\frac{a(1+\frac{b}{a})}{c(1+\frac{d}{c})}\leq \frac{a}{c}$.

The next lemma shows that there exists a most cost-effective star which has some special properties.

\begin{lemma}\label{le11-23-1}
Let $C$ be a node set of graph $G$. There exists a most cost-effective star $S_{u}$ with respect to $C$ such that for any $v\in V(S_u)\setminus \{u\}$, the following properties hold:

$(\romannumeral1)$ $b_{C}^{S_{u}}(v)=1$;

$(\romannumeral2)$ $\frac{1}{c(v)}\geq g_{C}(S_u)/c(S_{u})$;

$(\romannumeral3)$ $q_C(v)=0$;


$(\romannumeral4)$ $\mid NC_{C}(v)\mid =1$ and the component of $G[C]$ adjacent with $v$ is not adjacent with $u$.
\end{lemma}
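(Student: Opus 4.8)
The plan is to fix an optimal center $u$, start from a most cost-effective star $S_u$ chosen to have the \emph{fewest} leaves among all most cost-effective stars, and then derive the four properties by deleting or exchanging unprofitable leaves while never decreasing the ratio $g_C(S_u)/c(S_u)$. The backbone of everything is a single integrality remark: since $p(\cdot)$ and $q(\cdot)$ are integer-valued, so is $f=p+q$, hence each $-\Delta_{u_i}f(C_i)$ is a \emph{nonnegative integer} by Lemma \ref{eq0402-6}; therefore $b_C^{S_u}(u_i)\in\{0,1\}$. In particular property $(\romannumeral1)$ is just the statement that no leaf has $b=0$, and a leaf with $b=0$ contributes nothing to the numerator $g_C(S_u)$ while adding a strictly positive amount to the denominator $c(S_u)$.

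First I would establish $(\romannumeral1)$ and $(\romannumeral3)$ by deleting every leaf with $b_C^{S_u}(v)=0$. A leaf with $q_C(v)>0$ has $b=0$ by definition, so $(\romannumeral3)$ follows once $(\romannumeral1)$ holds. The one thing to verify is that deleting a $b=0$ leaf does not lower the $b$-value of any later leaf $u_j$. For the $q$-part this is immediate: $-q$ is submodular (Lemma \ref{lem0330-1}$(a)$), so by Lemma \ref{lem0413-1} shrinking the prefix can only \emph{increase} $-\Delta_{u_j}q$. For the $p$-part one uses Lemma \ref{eq0402-6}: a deleted leaf with $b=0$ and $q_C=0$ satisfies $-\Delta p(C_i)=0$, i.e.\ it is adjacent to exactly the single $u$-component, so it is not a bridge and its removal leaves $|NC_{C_i}(u_j)|$ unchanged for every later $u_j$. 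Thus removing all $b=0$ leaves cannot decrease $g_C(S_u)$ but strictly decreases $c(S_u)$, contradicting most cost-effectiveness unless none exist.

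Next I would prove $(\romannumeral2)$ using the ordering together with minimality and relation \eqref{eq11-23-1}. Order the (now all-$b=1$) leaves by cost and delete the most expensive leaf $u_s$; since $u_s$ is last, the prefixes $C_i$ of the remaining leaves are unchanged, so $g_C$ drops by exactly $b_C^{S_u}(u_s)=1$ and $c$ drops by $c(u_s)$. Minimality forces this deletion to strictly decrease the ratio, i.e.\ $\frac{g_C(S_u)-1}{c(S_u)-c(u_s)}<\frac{g_C(S_u)}{c(S_u)}$; feeding this into \eqref{eq11-23-1} with $a=g_C(S_u)-1$, $b=1$, $c=c(S_u)-c(u_s)$, $d=c(u_s)$ yields $\tfrac{1}{c(u_s)}>g_C(S_u)/c(S_u)$. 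As every leaf $v$ has $c(v)\le c(u_s)$, this gives $\tfrac{1}{c(v)}\ge\tfrac{1}{c(u_s)}>g_C(S_u)/c(S_u)$, which is $(\romannumeral2)$.

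The hard part will be $(\romannumeral4)$. After $(\romannumeral1)$--$(\romannumeral3)$ each leaf $v=u_i$ has $q_C(v)=0$ and $-\Delta_{u_i}f(C_i)\ge 1$, which by the decomposition $-\Delta_{u_i}f(C_i)=-\Delta_{u_i}q(C_i)+\big(|NC_{C_i}(u_i)|-1\big)$ means $u_i$ either dominates a still-deficient neighbor or links the $u$-component to a further component. I would argue that one may take a most cost-effective star in which every retained leaf contributes \emph{only} by attaching to exactly one component of $G[C]$ that $u$ itself does not reach, and that leaves adjacent to no other component, to a component already adjacent to $u$, or to two or more components of $G[C]$ can be removed or swapped without lowering the ratio, by weighing the marginal $g$ each such leaf supplies against its cost and invoking \eqref{eq11-23-1} once more. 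I expect the main obstacle to be precisely the connectivity bookkeeping: deleting or exchanging an interior leaf perturbs the prefixes $C_i$ and hence the component structure seen by subsequent leaves, so one must track how $|NC_{C_i}(\cdot)|$ evolves under these edits (this is also what must be checked to make the $q_C>0$ deletions of the second paragraph connectivity-neutral) and confirm that the net change in $g_C(S_u)$ is nonnegative throughout. Showing that the connecting role of any leaf violating $(\romannumeral4)$ is either redundant or reproducible by a leaf obeying $(\romannumeral4)$ is, I believe, the crux of the whole argument.
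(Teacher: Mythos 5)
Your arguments for properties $(\romannumeral1)$--$(\romannumeral3)$ are sound and close in spirit to the paper's: deletion of $b=0$ feet plus submodularity of $-q$ handles $(\romannumeral1)$ and $(\romannumeral3)$ (your worry about connectivity bookkeeping for deleted feet with $q_C>0$ is actually moot: removing a node from a prefix can only split components, hence can only increase $\mid NC_{\cdot}(u_j)\mid$ for later feet, which is exactly what inequality \eqref{eq0316-1} in the paper records); and your proof of $(\romannumeral2)$ by deleting only the \emph{most expensive} foot, so that no prefix is perturbed, and then invoking \eqref{eq11-23-1} and cost-monotonicity, is a legitimate and slightly cleaner variant of the paper's argument (the paper instead proves $b_C^{S_u}(u_j)=b_C^{S_u-u_i}(u_j)$ for all $i\neq j$ so that an arbitrary foot can be deleted).

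The genuine gap is property $(\romannumeral4)$, and the exchange/removal strategy you propose for it cannot be completed, because the statement you are trying to force is simply false for a fixed star: a most cost-effective star $S_u$ may genuinely contain a foot $v$ with $\mid NC_C(v)\mid\geq 2$, or with its unique component neighbor adjacent to $u$ yet $-\Delta_vq(C)>0$; such a foot has $b_C^{S_u}(v)=1$ and contributes to $g_C(S_u)$ exactly like a ``good'' foot, so deleting or swapping it only reproduces the ratio information of property $(\romannumeral2)$ and yields no contradiction. The missing idea is the paper's dichotomy with \emph{trivial} stars, which is why the lemma is an existence statement and why Algorithm \ref{alg2} gives priority to trivial stars. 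Either some trivial star is most cost-effective (and then the lemma holds vacuously for it), or none is; in the latter case, for every foot $u_j$ one must have $-\Delta_{u_j}f(C)=0$, since otherwise monotonicity and integrality of $-f$ give $-\Delta_{u_j}f(C)\geq 1$, and then property $(\romannumeral2)$ yields
$$
\frac{g_C(\{u_j\})}{c(u_j)}=\frac{-\Delta_{u_j}f(C)}{c(u_j)}\geq\frac{1}{c(u_j)}\geq\frac{g_C(S_u)}{c(S_u)},
$$
making $\{u_j\}$ a most cost-effective trivial star, a contradiction. From $-\Delta_{u_j}f(C)=0$ one gets $-\Delta_{u_j}p(C)=0$ and $-\Delta_{u_j}q(C)=0$ (equation \eqref{eq0401-1} of the paper), so $\mid NC_C(u_j)\mid=1$ at once; and if that unique component were adjacent to $u$, then $-\Delta_{u_j}p(C\cup prec(u_j))=0$ and, by sandwiching $0\leq-\Delta_{u_j}q(C\cup prec(u_j))\leq-\Delta_{u_j}q(C)=0$, also $-\Delta_{u_j}q(C\cup prec(u_j))=0$, whence $b_C^{S_u}(u_j)=0$, contradicting $(\romannumeral1)$. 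In short, $(\romannumeral4)$ is proved not by repairing the star but by playing the star off against its own feet viewed as competing trivial stars; your proposal never invokes trivial stars, which is precisely the ingredient that closes the ``crux'' you flagged as open.
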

\begin{proof}
Let $S_u$ be a most cost-effective star. If $V(S_u)=\{u\}$, then $S_u$ satisfies the above conditions.
In the following, we assume that there is no trivial most cost-effective star. Suppose  $V(S_u)=\{u,u_{1},u_{2},\ldots,u_{s}\}$ and $c(u_{1})\leq c(u_{2})\leq\ldots\leq c(u_{s})$.

{\em Proof of property $(\romannumeral1)$.} We first show that for any $1\leq i\leq s$,
\begin{equation}\label{eq0316-1}
b_{C}^{S_{u}}(u_{j})\leq b_{C}^{S_u-u_i}(u_{j}).
\end{equation}
If $q_C(u_j)>0$, then both $b_{C}^{S_{u}}(u_{j})=b_{C}^{S_u-u_i}(u_{j})=0$, and \eqref{eq0316-1} trivially holds. So, suppose $q_C(u_j)=0$. In this case, $b_C^{S_u}(u_j)$ is determined by $-\Delta_{u_{j}}f(C\cup\{u,u_{1},\ldots,u_{j-1}\})$. By Lemma \ref{lem0330-1} and Lemma \ref{lem0413-1}, for any $1\leq i\leq s$,
\begin{align}
& -\Delta_{u_{j}}q(C\cup\{u,u_{1},\ldots,u_{j-1}\}) \leq -\Delta_{u_{j}}q(C\cup\{u,u_{1},\ldots,u_{i-1},u_{i+1},\ldots,u_j\})\ \mbox{and} \label{eq0613-1} \\
& -\Delta_{u_{j}}p(C\cup\{u,u_{1},\ldots,u_{j-1}\})
 \leq -\Delta_{u_{j}}p(C\cup\{u,u_{1},\ldots,u_{i-1},u_{i+1},\ldots,u_j\})+1.\label{eq0330-2}
\end{align}
In fact, \eqref{eq0330-2} can be improved to
\begin{align*}
-\Delta_{u_{j}}p(C\cup\{u,u_{1},\ldots,u_{j-1}\})
\leq -\Delta_{u_{j}}p(C\cup\{u,u_{1},\ldots,u_{i-1},u_{i+1},\ldots,u_j\}),
\end{align*}
because $u_j$ is adjacent with $u$ and thus the inequality in $(b_1)$ of Lemma \ref{lem0330-1} is strict. Combining this with \eqref{eq0613-1}, we have
\begin{align}\label{eq0330-3}
-\Delta_{u_{j}}f(C\cup\{u,u_{1},\ldots,u_{j-1}\})
\leq -\Delta_{u_{j}}f(C\cup\{u,u_{1},\ldots,u_{i-1},u_{i+1},\ldots,u_j\}),
\end{align}
and thus \eqref{eq0316-1} is proved.

Next, we show that
\begin{equation}\label{eq11-23-2}
\mbox{ for any $1\leq i\leq s$, } b_{C}^{S_{u}}(u_{i})\neq 0.
\end{equation}
Suppose \eqref{eq11-23-2} is not true, and $j$ is an index with $b_{C}^{S_{u}}(u_{j})=0$. Then by \eqref{eq0316-1},
\begin{align}\label{eq11-23-6}
g_{C}(S_{u})
&=-\Delta_{u}f(C)+\sum_{i=1}^{s}b_{C}^{S_{u}}(u_{i})\notag\\
&\leq-\Delta_{u}f(C)+\sum_{i=1}^{j-1}b_{C}^{S_u-u_j}(u_{i})+\sum_{i=j+1}^{s}b_{C}^{S_u-u_j}(u_{i})\notag\\
&=g_{C}(S_{u}-u_{j}).
\end{align}
Combining this with $c(S_u)=c(S_u-u_j)+c(u_j)>c(S_u-u_j)$, we have
\begin{equation}\label{eq2-15-3}
\frac{g_{C}(S_u-u_j)}{c(S_u-u_j)}> \frac{g_{C}(S_{u})}{c(S_{u})},
\end{equation}
and thus $S_u-u_j$ is a more cost-effective star than $S_u$, contradicting the assumption on $S_{u}$. So property \eqref{eq11-23-2} is proved.

By the definition of $b_{C}^{S_{u}}(u_{j})$ in (\ref{eq3-15-3}), we have $b_{C}^{S_{u}}(u_{j})\in\{0,1\}$, and thus property $(\romannumeral1)$ follows from \eqref{eq11-23-2}.

{\em Proof of property $(\romannumeral2)$.} We prove that for any $j=1,\ldots,s$,
\begin{equation}\label{eq11-23-7}
\frac{1}{c(u_{j})}\geq\frac{g_{C}(S_{u})}{c(S_{u})}.
\end{equation}
Note that for any $i>j$, we have $b_C^{S_u}(u_j)=b_C^{S_u-u_i}(u_j)$. For any $i<j$, by \eqref{eq0316-1} and property $(\romannumeral1)$, we have $1=b_{C}^{S_{u}}(u_{j})\leq b_{C}^{S_u-u_i}(u_{j})\leq 1$, and thus $b_{C}^{S_{u}}(u_{j})=b_{C}^{S_u-u_i}(u_{j})=1$. In other words, $b_{C}^{S_{u}}(u_{j})=b_{C}^{S_u-u_i}(u_{j})$ for any $i\neq j$. Hence
$$
g_C(S_u)=g_C(S_u-u_j)+b_C^{S_u}(u_j)=g_C(S_u-u_j)+1.
$$
Since $S_{u}$ is a most cost-effective star, we have
$$
\frac{g_{C}(S_u-u_j)}{c(S_u-u_j)}\leq \frac{g_{C}(S_{u})}{c(S_{u})}
=\frac{g_{C}(S_u-u_j)+1}{c(S_{u}-u_j)+c(u_j)},
$$
Combining this with (\ref{eq11-23-1}), we have inequality \eqref{eq11-23-7}. Thus property $(\romannumeral2)$ is proved.

{\em Proof of property $(\romannumeral3)$.} This property directly follows from property $(\romannumeral1)$ and the definition of $b_C^{S_u}$ in \eqref{eq3-15-3}.

Before proving property $(\romannumeral4)$, we first prove
\begin{equation}\label{eq0401-1}
-\Delta_{u_j}f(C)=-\Delta_{u_j}q(C)=-\Delta_{u_j}p(C)=0.
\end{equation}
Suppose $j$ is an index with $-\Delta_{u_j}f(C)\neq 0$. Then by the monotonicity of $-f$ (see Lemma \ref{lem0330-1}), we have $-\Delta_{u_j}f(C)\geq 1$. Combining this with inequality (\ref{eq11-23-7}), for the trivial star $\{u_j\}$, we have
$$
\frac{g_C(\{u_j\})}{c(u_j)}=\frac{-\Delta_{u_j}f(C)}{c(u_j)}\geq \frac{1}{c(u_{j})}\geq\frac{g_{C}(S_{u})}{c(S_{u})},
$$
which implies that $\{u_j\}$ is a most cost-effective star, contradicting our assumption that there is no trivial most cost-effective star. So, $-\Delta_{u_j}f(C)=0$.

Notice that $q_C(u_j)=0$ means $u_j$ is dominated by at least $m$ nodes in $C$. So, $u_j$ is adjacent with $C$, and thus by $(b_1)$ of Lemma \ref{lem0330-1}, we have $-\Delta_{u_j}p(C)\geq 0$. By the monotonicity of $-q$, we have $-\Delta_{u_j}q(C)\geq 0$. Hence in order that $-\Delta_{u_j}f(C)=0$, we must have $-\Delta_{u_j}p(C)=0$ and $-\Delta_{u_j}q(C)=0$. Equalities in \eqref{eq0401-1} are proved.

{\em Proof of property $(\romannumeral4)$.}
Notice that $-\Delta_{u_j}p(C)=0$ implies that $u_j$ is adjacent with exactly one component of $G[C]$. If this component is also adjacent with $u$, then we also have $-\Delta_{u_{j}}p(C\cup prec(u_j))=0$ (notice that by the star structure, $prec(u_j)$ are in a same component of $G[C\cup prec(u_j)]$). By the monotonicity and submodularity of $-q$, we have $0\leq -\Delta_{u_{j}}q(C\cup prec(u_j))\leq -\Delta_{u_j}q(C)=0$, and thus $-\Delta_{u_{j}}q(C\cup prec(u_j))=0$. But then $-\Delta_{u_{j}}f(C\cup prec(u_j))=0$ and thus $b_C^{S_u}(u_j)=0$, contradicting property $(\romannumeral1)$. Hence the unique component of $G[C]$ adjacent with $u_j$ is not adjacent with $u$. The proof is completed.
\end{proof}

The following lemma shows that a most cost-effective star can be found efficiently.

\begin{lemma}\label{le09-01-1}
For a node set $C$ of graph $G$, a most cost-effective star satisfying Lemma \ref{le11-23-1} can be found in time $O(n^{2})$, where $n$ is the number of nodes in $G$.
\end{lemma}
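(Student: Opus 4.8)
The plan is to show that Algorithm \ref{alg2} correctly computes a star satisfying the four properties of Lemma \ref{le11-23-1} and that it does so in $O(n^2)$ time. First I would argue correctness: for each candidate center $u \in V\setminus C$, Algorithm \ref{alg2} builds the star greedily by examining the neighbors in increasing cost order. The key observation justifying this greedy ordering is property $(\romannumeral2)$ of Lemma \ref{le11-23-1}, which guarantees that in an optimal star the cheapest usable neighbors are exactly the ones to include. I would verify that the membership test on line \ref{line0330-4}, namely $b_C^{S_u}(u_j)=1$ together with $\tfrac{1}{c(u_j)}\geq g_C(S_u)/c(S_u)$, mirrors conditions $(\romannumeral1)$ and $(\romannumeral2)$, while the preselection of $N_u$ via conditions $(\romannumeral3)$ and $(\romannumeral4)$ discards any neighbor that Lemma \ref{le11-23-1} forbids. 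Because Lemma \ref{le11-23-1} asserts that \emph{some} most cost-effective star has all these properties, it suffices to check that the star produced for center $u$ is at least as cost-effective as that optimal star for the same center, and then take the best over all centers (giving priority to trivial stars, matching the tie-breaking in the proof).

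A subtle point I would address is that the threshold $g_C(S_u)/c(S_u)$ on line \ref{line0330-4} changes as nodes are added to $S_u$, so the greedy condition is evaluated against a moving target. Here I would invoke relation \eqref{eq11-23-1}: when a neighbor $u_j$ satisfies $\tfrac{1}{c(u_j)}\geq g_C(S_u)/c(S_u)$ for the current $S_u$, adding it (which increases $g_C$ by exactly $1=b_C^{S_u}(u_j)$ and $c(S_u)$ by $c(u_j)$) does not decrease the cost-effectiveness, and by the monotone ordering of costs the same inequality continues to admit all subsequent cheaper candidates that should be included. I would use the fact, established in the proof of Lemma \ref{le11-23-1}, that for the nodes retained in the optimal star $b_C^{S_u}(u_j)$ is invariant under deletion of other star nodes, so the marginal contributions add up independently and the greedy accumulation reproduces the optimal star's value.

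For the running time, I would count operations directly. There are $O(n)$ choices of center $u$. For a fixed $u$, computing $N_u$ by checking conditions $(\romannumeral3)$ and $(\romannumeral4)$ over the $O(\deg(u))$ neighbors costs $O(n)$ after $q_C$ and the component structure of $G[C]$ are precomputed. Sorting $N_u$ by cost takes $O(n\log n)$, but one can instead maintain costs in a way that the whole inner loop evaluating $b_C^{S_u}(u_j)$ and the threshold runs in $O(\deg(u))$ incremental steps, since each $b_C^{S_u}(u_j)$ depends only on whether $u_j$ merges a new component, which is an $O(1)$ check once the single component neighbor of each retained $u_j$ is identified. Summing $\sum_u O(\deg(u))=O(n+|E|)=O(n^2)$ over all centers, together with the one-time $O(n^2)$ cost of computing components and the $q_C$ values, yields the claimed $O(n^2)$ bound.

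The main obstacle I anticipate is the bookkeeping in the running-time argument: naively recomputing $-\Delta_{u_j}f(C\cup prec(u_j))$ from scratch for each $u_j$ would be too expensive, so the crux is to exploit the star structure (all of $prec(u_j)$ lies in one component of $G[C\cup prec(u_j)]$, as used in the proof of property $(\romannumeral4)$) to evaluate each marginal profit incrementally in constant time. Establishing that this incremental evaluation is valid — that the component count and the $q$-values update in $O(1)$ per added leaf — is where the careful work lies; the correctness of the greedy selection itself follows fairly directly from Lemma \ref{le11-23-1} and relation \eqref{eq11-23-1}.
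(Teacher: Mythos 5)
Your outline handles only half of what must be proved, and the half it omits is the crux. What you establish (via \eqref{eq11-23-1} and the non-decreasing cost-effectiveness of the current star) is that the star $S_u$ output by Algorithm \ref{alg2} \emph{satisfies the four properties} of Lemma \ref{le11-23-1} --- this is the paper's Claim 1, and your ``moving target'' paragraph is essentially the paper's argument there. What is missing is the paper's Claim 2: that the computed star is in fact \emph{most cost-effective} among all stars centered at $u$. You correctly reduce this to showing $S_u$ is at least as cost-effective as an optimal star $S_u^*$ satisfying the lemma, but then only assert that ``the marginal contributions add up independently and the greedy accumulation reproduces the optimal star's value.'' That assertion is false as stated: contributions of distinct feet are \emph{not} independent, because two neighbors of $u$ adjacent to the same component of $G[C]$ cannot both contribute $1$ --- once one of them is added, the $b_C^{S_u}(\cdot)$-value of the other drops to $0$ by \eqref{eq3-15-3}. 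Concretely, the greedy loop may reject a node $u_j\in S_u^*$ because an earlier, cheaper node $u_\ell$ adjacent to the same unique component of $G[C]$ was already added to $S_u$, so $b_C^{S_u}(u_j)=0$ even though the threshold test passes; your sketch gives no reason why the star built from these substituted feet is as good as $S_u^*$.

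The paper closes this gap with an exchange argument you would need to reproduce: since in $S_u^*$ at most one foot per component of $G[C]$ can have $b$-value $1$, the blocking node $u_\ell$ is not in $S_u^*$; as $c(u_\ell)\le c(u_j)$, the star $S_u'=S_u^*+u_\ell-u_j$ satisfies $g_C(S_u')=g_C(S_u^*)$ and $c(S_u')\le c(S_u^*)$, hence is again a most cost-effective star satisfying the lemma and sharing one more foot with $S_u$; induction on $|S_u^*\setminus S_u|$ then reduces everything to the case $S_u^*\subseteq S_u$. That remaining case also needs an argument, because greedy may have added extra feet beyond $S_u^*$: the paper takes a maximum-cost node $u_j\in S_u\setminus S_u^*$, uses property $(\romannumeral2)$ to get $1/c(u_j)\ge g_C(S_u)/c(S_u)$, and applies \eqref{eq11-23-1} to conclude $g_C(S_u)/c(S_u)\ge g_C(S_u^*)/c(S_u^*)$. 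Your running-time count is fine (and somewhat more careful than the paper's, which simply charges each center $u$ with $O(|N(u)|)$ work for a total of $2|E(G)|=O(n^2)$), but without the exchange step the claim that the output is a \emph{most} cost-effective star is unproven.
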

\begin{proof}
The computation method is described in Algorithm \ref{alg2}. For each $u\in V\setminus C$, the algorithm finds a {\em most cost-effective star centered at $u$}, which satisfies Lemma \ref{le11-23-1} (this will be proved in the following), denote it as $S_u$. A most cost-effective star with respect to $C$ is the best one of $\{S_u\colon u\in V\setminus C\}$. The reason why priority is given to trivial star is: if the output is a nontrivial star, then no trivial star is most cost-effective, and property \eqref{eq0401-1} holds, which brings more structural property to be used in the analysis.

The algorithm is illustrated by the example in Fig. \ref{fig0322-2}, and the proof of the correctness is divided into two steps.

\begin{figure}[!htbp]
\begin{center}
\begin{picture}(110,70)
\put(10,10){\circle{20}}\put(70,10){\circle{20}}\put(100,10){\circle{20}}
\put(10,35){\circle*{5}}\put(40,35){\circle*{5}}\put(70,35){\circle*{5}}\put(100,35){\circle*{5}}
\put(55,60){\circle*{5}}
\put(10,60){\circle{20}}\put(100,60){\circle{20}}
{\linethickness{0.45mm}
\qbezier(55,60)(32,47)(10,35)
\qbezier(55,60)(62,47)(70,35)
}
\qbezier(10,20.5)(10,27)(10,35)\qbezier(70,20.5)(70,27)(70,35)
\qbezier(17,17)(28,26)(40,35)\qbezier(100,20)(100,27)(100,35)
\qbezier(55,60)(47,47)(40,35)\qbezier(55,60)(77,47)(100,35)
\qbezier(55,60)(37,60)(20,60)\qbezier(55,60)(72,60)(90,60)

\put(52,64){$u$}\put(13,31){$u_1$}\put(43,33){$u_2$}\put(73,33){$u_3$}\put(103,33){$u_4$}
\end{picture}
\caption{An illustration of the execution of Algorithm \ref{alg2}. Every $u_i$ ($i=1,2,3,4$) is adjacent with exactly one component of $G[C]$ (indicated by big circle) which is not adjacent with the center $u$. Suppose $c(u_1)\leq c(u_2)\leq c(u_3)\leq c(u_4)$ and only $u_4$ has its cost $c(u_4)>\frac{c(S_{u}^{curr})}{g_{C}(S_{u}^{curr})}$. The blackened structure is the final $S_u$. The reason why node $u_2$ is not added into $S_u$ is because $b_C^{S_u}(u_2)=0$. Node $u_4$ is not added into $S_u$ because its cost is too large to satisfy property $(\romannumeral2)$.}
\label{fig0322-2}
\end{center}
\end{figure}
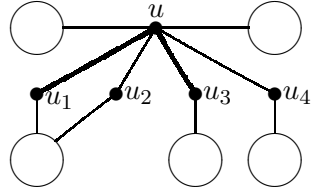

{\bf Claim 1.} The star $S_u$ computed by the algorithm satisfies the four properties described in Lemma \ref{le11-23-1}.

Notice that if $q_{C}(u)>0$, then for any neighbor $v$ of $u$, we have $-\Delta_{v}q(C)>0$, violating property $(\romannumeral3)$. Hence, only when $q_C(u)=0$, we need to consider a nontrivial star (through line 3 to line 9 of Algorithm \ref{alg2}).

Denote by $N_{u}$ the set of nodes in $N(u)$ satisfying properties $(\romannumeral3)$ and $(\romannumeral4)$. 
The feet of $S_u$ can only be taken from $N_u$. The idea of the algorithm is to start from the trivial star $S_u=\{u\}$, and sequentially check nodes of $N_u$ in increasing order of costs. If properties $(\romannumeral1)$ and $(\romannumeral2)$ are satisfied, then expand $S_u$. What needs to be explained is:
%
%
why $\frac{1}{c(u_j)}\geq
\frac{g_{C}(S_{u}^{curr})}{c(S_{u}^{curr})}$ for the {\em current
star} $S_u^{curr}$ implies $\frac{1}{c(u_j)}\geq
\frac{g_{C}(S_{u}^{final})}{c(S_{u}^{final})}$ for the final star
$S_u^{final}$ computed by the algorithm. Suppose $u_\ell$ is the
last node added into $S_{u}$. Denote
$S_u^{u_\ell}=S_u^{final}-u_\ell$. Since $u_\ell$ is eligible to be
added by the algorithm, we have $\frac{1}{c(u_\ell)}\geq
\frac{g_{C}(S_{u}^{u_\ell})}{c(S_{u}^{u_\ell})}$ and
$b_C^{S_u}(u_\ell)=1$. It follows that
$g_C(S_u^{final})=g_C(S_u^{u_\ell})+1$, and thus
$$
\frac{g_C(S_u^{final})}{c(S_u^{final})}=\frac{g_C(S_u^{u_\ell})+1}{c(S_u^{u_\ell})+c(u_\ell)}.
$$
Then by (\ref{eq11-23-1}), we have
\begin{equation}\label{eq0401-4}
\frac{1}{c(u_\ell)}\geq \frac{g_{C}(S_{u}^{final})}{c(S_{u}^{final})}\geq \frac{g_{C}(S_{u}^{u_\ell})}{c(S_{u}^{u_\ell})}.
\end{equation}
Since $c(u_j)\leq c(u_\ell)$, we also have $\frac{1}{c(u_j)}\geq\frac{g_{C}(S_{u}^{final})}{c(S_{u}^{final})}$.

It should be remarked that similarly to the derivation for the right side of inequality \eqref{eq0401-4}, by induction on the feet of $S_u$ in the reverse order of their addition into $S_u$, it can be seen that $g_C(S_u^i)/c(S_u^i)\leq g_C(S_u^j)/c(S_u^j)$ for $i<j$, where $S_u^i$ is the current star when $u_i$ is added. As a corollary,
\begin{equation}\label{eq0401-5}
\frac{g_C(S_u^{final})}{c(S_u^{final})}\geq \frac{g_C(S_u^{curr})}{c(S_u^{curr})}
\end{equation}
throughout the process.

{\bf Claim 2.} The computed star $S_u$ is indeed most cost-effective.

Let $S_{u}^*$ be a most cost-effective star centered at $u$ which satisfies those properties in Lemma \ref{le11-23-1}. We shall prove that
\begin{equation}\label{eq1-21-3}
\frac{g_{C}(S_{u})}{c(S_{u})}= \frac{g_{C}(S_u^*)}{c(S_u^*)}.
\end{equation}

First consider the case that $S_u^*\subseteq S_{u}$. If $S_u^*=S_u$, then \eqref{eq1-21-3} is obviously true. So, suppose $S_{u}\setminus S_u^*\neq \emptyset$. Let $u_j$ be a maximum-cost node of $S_{u}\setminus S_u^*$. By property $(\romannumeral2)$, $\frac{1}{c(u_j)}\geq\frac{g_{C}(S_{u})}{c(S_{u})}$. Notice that any star $S_v$ satisfying property $(\romannumeral1)$ has $g_C(S_v)=-\Delta_vf(C)+\mid S_v\setminus \{v\}\mid$. So, $g_{C}(S_{u})=g_{C}(S_u^*)+\mid S_{u}\setminus S_u^*\mid $. Then by the assumption that $u_j$ has the maximum cost in $S_u\setminus S_u^*$, we have
$$\frac{\mid S_{u}\setminus S_u^*\mid}{c(S_{u}\setminus S_u^*)}\geq\frac{1}{c(u_j)}\geq\frac{g_{C}(S_{u})}{c(S_{u})}
=\frac{g_{C}(S_u^*)+\mid S_{u}\setminus S_u^*\mid}{c(S_u^*)+c(S_{u}\setminus S_u^*)}.$$
Then by (\ref{eq11-23-1}), we have $\frac{g_{C}(S_{u})}{c(S_{u})}\geq \frac{g_{C}(S_u^*)}{c(S_u^*)}$, and thus $\frac{g_{C}(S_{u})}{c(S_{u})}=\frac{g_{C}(S_u^*)}{c(S_u^*)}$ by the optimality of $S_u^*$.

Next, consider the case when $S_u^*\setminus S_{u}\neq \emptyset$. Consider a node $u_j\in S_u^*\setminus S_{u}$. By Lemma \ref{le11-23-1} and observation \eqref{eq0401-5},
$$
\frac{1}{c(u_j)}\geq \frac{g_{C}(S_u^*)}{c(S_u^*)}\geq \frac{g_{C}(S_{u})}{c(S_{u})}\geq \frac{g_{C}(S_{u}^{curr})}{c(S_{u}^{curr})}.
$$
So, the reason why $u_j$ is not added into $S_{u}$ is because $b_C^{S_u}(u_j)=0$, that is, $-\Delta_{u_j}f(C\cup prev(u_j))=0$. By Lemma \ref{eq0402-6}, we have $-\Delta_{u_j}p(C\cup prev(u_j))=0$, which implies that the unique component of $G[C]$ adjacent with $u_j$ is also adjacent with a node $u_\ell$ with $\ell<j$ which has been added into $S_u$ before. Note that for those nodes in $N_u$ which are adjacent with the same component of $G[C]$, in order that $b_C^{S_u^*}$ has value 1, at most one of them can belong to $S_u^*$. So, $u_\ell\not\in S_u^*$. Let $S'_u=S_u^*+u_\ell-u_j$. Then $g_{C}(S_u^*)=g_{C}(S'_u)$. Since $u_\ell$ is ordered before $u_j$, we have $c(u_\ell)\leq c(u_j)$. Then
$$\frac{g_{C}(S_u^*)}{c(S_u^*)}=\frac{g_C(S'_u)}{c(S'_u)+c(u_j)-c(u_\ell)}\leq \frac{g_C(S'_u)}{c(S'_u)}\leq \frac{g_{C}(S_u^*)}{c(S_u^*)}.$$
It follows that $c(u_\ell)=c(u_j)$ and $\frac{g_{C}(S_u^*)}{c(S_u^*)}=\frac{g_C(S'_u)}{c(S'_u)}$, which implies that $S_u'$ is also a most cost-effective star satisfying Lemma \ref{le11-23-1}. Notice that $S_u'$ and $S_u$ have one more common foot. Proceeding like this, by an inductive argument on $\mid S_u^*\setminus S_u\mid$, it can be shown that $S_u$ is also most cost-effectiveness.

The correctness of Algorithm \ref{alg2} follows from Claim 1 and Claim 2. As to the time complexity, note that for any $u\in V\setminus C$, the algorithm only considers $u$ and $N(u)$ at most once, so the time spent by the algorithm is at most $2\mid E(G)\mid$, which is $O(n^{2})$.
\end{proof}

\subsection{Feasibility and Approximation Ratio}\label{sec.4}

Before proving that the output of Algorithm \ref{alg1} is a feasible solution with the desired approximation ratio, we first prove two technical lemmas.

\begin{lemma}\label{coro0328-1}
Let $C$ be a node set and $S_u$ be a most cost-effective star satisfying the properties in Lemma \ref{le11-23-1}. Then $g_C(S_u)=-\Delta_{S_u}f(C)$.
\end{lemma}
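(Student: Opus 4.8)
The plan is to expand both sides into per-foot contributions and match them term by term. Write $S_u\setminus\{u\}=\{u_1,\dots,u_s\}$ with $c(u_1)\le\cdots\le c(u_s)$, and recall $C_i=C\cup\{u,u_1,\dots,u_{i-1}\}$ as in \eqref{eq3-15-3}. If $S_u=\{u\}$ is trivial the identity is immediate, since $-\Delta_{S_u}f(C)=-\Delta_uf(C)=g_C(S_u)$. For a nontrivial star, the quantity $-\Delta_{S_u}f(C)$ telescopes along the order of the feet:
\[
-\Delta_{S_u}f(C)=-\Delta_uf(C)+\sum_{i=1}^{s}\bigl(-\Delta_{u_i}f(C_i)\bigr),
\]
whereas by \eqref{eq11-24-2} we have $g_C(S_u)=-\Delta_uf(C)+\sum_{i=1}^{s}b_C^{S_u}(u_i)$. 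Hence it suffices to prove the per-foot identity $b_C^{S_u}(u_i)=-\Delta_{u_i}f(C_i)$ for every $i$.

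Since $S_u$ satisfies Lemma \ref{le11-23-1}, property $(\romannumeral1)$ gives $b_C^{S_u}(u_i)=1$, and by definition \eqref{eq3-15-3} (with $q_C(u_i)=0$ from property $(\romannumeral3)$) this means $-\Delta_{u_i}f(C_i)\ge 1$. So the per-foot identity reduces to the exact equality $-\Delta_{u_i}f(C_i)=1$, which I would obtain by bounding the two summands of $-\Delta_{u_i}f(C_i)=-\Delta_{u_i}q(C_i)-\Delta_{u_i}p(C_i)$ from above. For the $q$-part, monotonicity and submodularity of $-q$ (Lemma \ref{lem0330-1}(a)) give $0\le-\Delta_{u_i}q(C_i)\le-\Delta_{u_i}q(C)$, and $-\Delta_{u_i}q(C)=0$ by \eqref{eq0401-1}, so $-\Delta_{u_i}q(C_i)=0$. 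For the $p$-part I would use $-\Delta_{u_i}p(C_i)=|NC_{C_i}(u_i)|-1$ from \eqref{eq0407-1} and bound $|NC_{C_i}(u_i)|\le 2$.

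The crux, and the step I expect to be the main obstacle, is this count of component neighbors. The node $u_i$ is a foot of the star, so it is adjacent to the center $u$ and hence to the single component of $G[C_i]$ containing $u$ (the component that has absorbed $u$, all earlier feet, and every component of $G[C]$ they attach to); moreover, by property $(\romannumeral4)$ of Lemma \ref{le11-23-1}, $u_i$ is adjacent to exactly one component of $G[C]$. Every component of $G[C_i]$ adjacent to $u_i$ must therefore be one of these two, giving $|NC_{C_i}(u_i)|\le 2$ and $-\Delta_{u_i}p(C_i)\le 1$. Combined with the $q$-part this yields $-\Delta_{u_i}f(C_i)\le 1$, and squeezing against the lower bound $-\Delta_{u_i}f(C_i)\ge 1$ forces $-\Delta_{u_i}f(C_i)=1=b_C^{S_u}(u_i)$. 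Summing over $i$ and adding $-\Delta_uf(C)$ then gives $g_C(S_u)=-\Delta_{S_u}f(C)$. The delicate point to write carefully is that the bound $|NC_{C_i}(u_i)|\le 2$ relies only on property $(\romannumeral4)$ and the star adjacency to $u$, so it holds regardless of how the earlier feet have reorganized the components of $G[C_i]$; the exactness (that the two components are genuinely distinct, i.e.\ the $G[C]$-component touched by $u_i$ has not already merged into $u$'s component) is not assumed but falls out automatically from the lower bound $-\Delta_{u_i}f(C_i)\ge1$.
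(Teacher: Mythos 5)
Your proof is correct, and it takes a recognizably different route from the paper's, in a way worth comparing. The paper splits off the center in one block, writing $-\Delta_{S_u}f(C)=-\Delta_uf(C)-\Delta_{S_u\setminus\{u\}}f(C\cup\{u\})$, kills the aggregate $q$-increment of all feet at once (via Lemma \ref{lem0414-1}, submodularity, and \eqref{eq0401-1}), and then evaluates the aggregate $p$-increment as exactly $\mid S_u\setminus\{u\}\mid$; that last step requires knowing that distinct feet have distinct component neighbors in $G[C]$, a fact the paper asserts only in a parenthetical (as a consequence of properties $(\romannumeral1)$ and $(\romannumeral4)$) without spelling out the argument. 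You instead telescope foot by foot and prove the pointwise identity $b_C^{S_u}(u_i)=-\Delta_{u_i}f(C_i)=1$ by a squeeze: the lower bound $-\Delta_{u_i}f(C_i)\geq 1$ comes for free from property $(\romannumeral1)$ and definition \eqref{eq3-15-3}, and the upper bound from $-\Delta_{u_i}q(C_i)=0$ (monotonicity and submodularity of $-q$ plus \eqref{eq0401-1}) together with $\mid NC_{C_i}(u_i)\mid\leq 2$ (property $(\romannumeral4)$ plus adjacency of $u_i$ to $u$'s component). The gain is exactly what you point out: the distinctness of the feet's component neighbors---the one step the paper leaves implicit---is never needed, since exactness falls out of the squeeze; your argument also yields the stronger per-foot statement rather than only the sum. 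The paper's aggregate computation is shorter and makes the structural picture explicit (each foot merges its own new component into $u$'s). One shared caveat: like the paper, you invoke \eqref{eq0401-1}, which is not among the four properties stated in Lemma \ref{le11-23-1} but is established inside its proof under the assumption that no trivial star is most cost-effective (which holds for nontrivial stars output by Algorithm \ref{alg2} because of the priority rule); since the paper's own proof of this lemma makes the identical appeal, this is not a gap in your argument.
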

\begin{proof}
The lemma holds if $S_u$ is a trivial star. So in the following, we consider nontrivial star. In this case, \eqref{eq0401-1} holds for any node $v\in S_u\setminus \{u\}$. In particular, $-\Delta_{v}q(C)=0$. Then by the monotonicity and submodularity of $-q$, we have
\begin{align*}
0\leq -\Delta_{S_u\setminus\{u\}}q(C\cup\{u\})\leq -\sum_{v\in S_u\setminus\{u\}}\Delta_vq(C\cup\{u\})\leq -\sum_{v\in S_u\setminus\{u\}}\Delta_vq(C)=0.
\end{align*}
So, $-\Delta_{S_u\setminus\{u\}}q(C\cup\{u\})=0$, and thus
$-\Delta_{S_u\setminus\{u\}}f(C\cup
\{u\})=-\Delta_{S_u\setminus\{u\}}p(C\cup \{u\})$. By property $(\romannumeral1)$ and $(\romannumeral4)$ (note that
these properties imply that those unique component neighbors of
distinct feet of $S_u$ are distinct), we have
\begin{align*}
-\Delta_{S_u}f(C)
&=-\Delta_{u}f(C)-\Delta_{S_u\setminus \{u\}}f(C\cup \{u\})\notag\\
&=-\Delta_{u}f(C)-\Delta_{S_u\setminus \{u\}}p(C\cup \{u\})\notag\\
&= -\Delta_{u}f(C)+\mid S_u\setminus \{u\}\mid \notag\\
&=-\Delta_{u}f(C)+\sum_{v\in S_u\setminus \{u\}}b_{C}^{S_{u}}(v)\notag\\
&=g_{C}(S_{u}).
\end{align*}
The lemma is proved.
\end{proof}

\begin{lemma}\label{le1-11-3}
For two node sets $C,C'\subseteq V(G)$, suppose there is an edge $uv$ with $u\in C'\setminus C$, $v\in V\setminus (C\cup C')$, and $q_{C}(v)>0$. Then,
$-\Delta_{C'}q(C)+(-\Delta_{v}q(C))\geq-\Delta_{C'\cup \{v\}}q(C)+1.$
\end{lemma}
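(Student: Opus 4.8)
The plan is to first strip the inequality down to its essential content by telescoping the marginal at $v$. Writing every term as $-\Delta_{A}q(C)=q(C)-q(C\cup A)$ and using the identity $-\Delta_{C'\cup\{v\}}q(C)=-\Delta_{C'}q(C)+\bigl(-\Delta_{v}q(C\cup C')\bigr)$, the common term $-\Delta_{C'}q(C)$ cancels from both sides. Hence the claim is equivalent to the single inequality
\begin{equation*}
-\Delta_{v}q(C)\geq -\Delta_{v}q(C\cup C')+1,
\end{equation*}
that is, adding $v$ reduces $q$ by at least one more unit over the smaller set $C$ than over $C\cup C'$. Submodularity of $-q$ (Lemma \ref{lem0330-1}$(a)$) already shows that this gap is nonnegative; the whole point is to produce the extra $+1$.

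Second, I would recall the explicit marginal formula derived inside the proof of Lemma \ref{lem0330-1}$(a)$: for any node set $S$ and any $v\notin S$,
\begin{equation*}
-\Delta_{v}q(S)=q_{S}(v)+\bigl|\{w\in N(v):q_{S}(w)>0\}\bigr|,
\end{equation*}
where the first summand is $v$'s own deficiency, lost when $v$ enters the set, and the second counts the still-deficient neighbors of $v$, each of which loses exactly one unit of deficiency. Since $v\notin C$ and $v\notin C\cup C'$, this formula applies to both $S=C$ and $S=C\cup C'$, and I would compare the two resulting expressions term by term.

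Third, I would exploit the hypotheses on the edge $uv$. For the own-deficiency term, $u\in C'\setminus C$ is a neighbor of $v$ lying in $C\cup C'$ but not in $C$, so $|N_{C\cup C'}(v)|\geq |N_{C}(v)|+1$; together with $q_{C}(v)>0$, which by integrality means $q_{C}(v)=m-|N_{C}(v)|\geq 1$, this forces $q_{C}(v)-q_{C\cup C'}(v)\geq 1$. For the neighbor-count term, monotonicity of $-q$ gives $q_{C\cup C'}(w)\leq q_{C}(w)$ for every $w$, hence $\{w\in N(v):q_{C\cup C'}(w)>0\}\subseteq\{w\in N(v):q_{C}(w)>0\}$ and the count over $C$ dominates the count over $C\cup C'$. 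Adding the two comparisons yields the required one-unit gap, completing the proof. The main obstacle is precisely the isolation and justification of this $+1$: it does not come from the general submodular structure but from the combination of the edge $uv$, which supplies the extra neighbor $u$, with the integrality of the deficiency $q_{C}(v)\geq 1$; everything else is routine monotonicity bookkeeping.
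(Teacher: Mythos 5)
Your proof is correct and takes essentially the same approach as the paper's: both rest on the explicit marginal formula $-\Delta_v q(S)=q_S(v)+|\{w\in N(v)\colon q_S(w)>0\}|$, extract the crucial $+1$ from the edge $uv$ together with $q_C(v)>0$, handle the deficient-neighbor count by monotonicity, and finish with the telescoping identity $-\Delta_{C'\cup\{v\}}q(C)=-\Delta_{C'}q(C)-\Delta_v q(C\cup C')$. The only cosmetic difference is that the paper routes through the intermediate set $C\cup\{u\}$ (using the exact equality $q_C(v)=q_{C\cup\{u\}}(v)+1$ and then submodularity of $-q$ to pass to $C\cup C'$), whereas you compare $C$ with $C\cup C'$ in a single step.
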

\begin{proof}
By the assumption $u, v\not\in C$, $q_C(v)>0$, and $v$ is adjacent with $u$, we have $q_{C}(v)=q_{C\cup \{u\}}(v)+1$. Combining this with the submodularity of $-q$, we have
\begin{align*}
-\Delta_{v}q(C) & = q_C(v)+\mid\{x\in N(v)\colon q_C(x)>0\}\mid\\
& \geq q_{C\cup \{u\}}(v)+1+\mid\{x\in N(v)\colon q_{C\cup\{u\}}(x)>0\}\mid\\
& =-\Delta_{v}q(C\cup \{u\})+1\\
& \geq-\Delta_{v}q(C\cup C')+1.
\end{align*}
It follows that
\begin{align*}
-\Delta_{C'}q(C)+(-\Delta_{v}q(C)) \geq -\Delta_{C'}q(C)-\Delta_{v}q(C\cup C')+1
=-\Delta_{C'\cup\{v\}}q(C)+1.
\end{align*}
The lemma is proved.
\end{proof}

The following result is a folklore for dominating set, which can be found, for example,  in Wan et al. \cite{Wan1}.

\begin{lemma}\label{le1-14-3}
Suppose $C$ is a dominating set of $G$ and $G[C]$ is not connected. Then, the two nearest components of $G[C]$ are at most three hops away.
\end{lemma}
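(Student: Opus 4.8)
The plan is to argue by contradiction using a shortest path between the two closest components. Let $C_1,C_2$ be two components of $G[C]$ whose distance $d$ (the least number of edges on a path joining a node of $C_1$ to a node of $C_2$) is smallest among all pairs of components, and suppose for contradiction that $d\geq 4$. Since $G$ is connected, I would fix a shortest path $P=v_0v_1\cdots v_d$ with $v_0\in C_1$ and $v_d\in C_2$, and write $d_G(\cdot,\cdot)$ for graph distance.

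The first key step is to show that every internal node $v_1,\ldots,v_{d-1}$ is a dominatee, i.e.\ lies in $V\setminus C$. Indeed, if some $v_i$ with $1\le i\le d-1$ belonged to $C$, it would lie in some component $C_3$ of $G[C]$; reading off the two subpaths of $P$ gives $d_G(C_1,C_3)\le i$ and $d_G(C_3,C_2)\le d-i$, and since $1\le i\le d-1$ both quantities are strictly less than $d$. Whether $C_3$ equals $C_1$, equals $C_2$, or is a third component, this produces a pair of distinct components at distance $<d$, contradicting the minimality of $d$.

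The second step uses the domination hypothesis together with $d\ge 4$. Because $d\ge 4$, the node $v_2$ is internal, hence a dominatee, so it has a neighbor $w\in C$; let $C_w$ be the component of $G[C]$ containing $w$. The triangle-type bounds $d_G(C_w,C_1)\le d_G(w,v_0)\le 1+2=3$ and $d_G(C_w,C_2)\le d_G(w,v_d)\le 1+(d-2)=d-1$ then force a contradiction in every case: if $C_w\notin\{C_1,C_2\}$, the first bound makes $\{C_1,C_w\}$ a pair at distance $\le 3<d$; if $C_w=C_1$, the second bound gives $d_G(C_1,C_2)\le d-1<d$; and if $C_w=C_2$, the first bound gives $d_G(C_1,C_2)\le 3<d$. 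Each contradicts the choice of $C_1,C_2$, so $d\le 3$, which is exactly the claim.

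I do not expect a genuine obstacle, which is consistent with the statement being labelled folklore. The only points demanding care are the two invocations of minimality: first to guarantee that the interior of the shortest path avoids $C$, and second to turn the dominator of $v_2$ into a strictly closer pair of components. The choice of $v_2$ (rather than $v_1$) is what keeps \emph{both} derived distances below $d$, and selecting the second vertex of $P$ is precisely what the hypothesis $d\ge 4$ permits.
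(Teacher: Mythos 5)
Your proof is correct, but there is nothing in the paper to compare it against: the paper states this lemma as folklore and simply cites Wan et al.\ \cite{Wan1}, giving no argument of its own. What you have written is the standard proof of this fact, and both of the delicate points are handled properly. First, the interior of the shortest path $P=v_0v_1\cdots v_d$ between the nearest pair $C_1,C_2$ must avoid $C$: your case split (the component $C_3$ of an internal $v_i$ being $C_1$, $C_2$, or a third component) correctly turns each possibility into either a violation of $d_G(C_1,C_2)=d$ or of the minimality of $d$ over all pairs. Second, the dominator $w$ of $v_2$ gives the two bounds $d_G(C_w,C_1)\le 3$ and $d_G(C_w,C_2)\le d-1$, and your three-way case analysis on $C_w$ exhausts all possibilities, each contradicting the choice of $C_1,C_2$ when $d\ge 4$. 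Your closing remark is also the right diagnosis: using $v_2$ rather than $v_1$ is exactly what makes both derived distances strictly smaller than $d$, and $d\ge 4$ guarantees $v_2$ is an internal vertex so that the first step applies to it. In short, you have supplied a complete, self-contained proof of a statement the paper outsources to a citation.
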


The next lemma shows that the algorithm outputs a feasible solution.

\begin{lemma}\label{le11-23-3}
The output $C$ of Algorithm \ref{alg1} is a $(1,m)$-CDS of graph $G$.
\end{lemma}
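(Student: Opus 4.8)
The plan is to argue by contradiction from the stopping rule. When Algorithm \ref{alg1} halts, no star $S_u$ has $g_C(S_u)>0$. Since $-f$ is monotone nondecreasing (Lemma \ref{lem0330-1}$(c)$) we have $-\Delta_u f(C)\ge 0$, and every $b_C^{S_u}(\cdot)\ge 0$, so in fact $g_C(S_u)\ge 0$ for every star; hence at termination $g_C(S_u)=0$ for all stars. In particular, taking trivial stars, $-\Delta_u f(C)=0$ for every $u\in V\setminus C$. I must deduce from this that $q(C)=0$ (each dominatee has at least $m$ neighbors in $C$) and $p(C)=1$ ($G[C]$ is connected); these are exactly the two requirements of a $(1,m)$CDS. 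The whole proof is contrapositive: if either condition fails I will exhibit a single star with $g_C>0$, contradicting the stopping rule. Throughout I will use the explicit expression $-\Delta_u f(C)=q_C(u)+|\{x\in N(u)\colon q_C(x)>0\}|+|NC_C(u)|-1$, which combines the computation of $-\Delta_u q$ in the proof of Lemma \ref{lem0330-1}$(a)$ with the identity $-\Delta_u p(C)=|NC_C(u)|-1$ from \eqref{eq0407-1}.

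First I would establish $m$-fold domination. Suppose $q(C)>0$ and pick $v$ with $q_C(v)>0$. If $v$ has a neighbor in $C$, then $|NC_C(v)|\ge 1$ while $-\Delta_v q(C)\ge q_C(v)\ge 1$, so the trivial star at $v$ has $-\Delta_v f(C)\ge 1>0$. If $v$ has no neighbor in $C$, then $q_C(v)=m$; for $m\ge 2$ the trivial star at $v$ already gives $-\Delta_v f(C)\ge m-1>0$. The only remaining case is $m=1$ with $v$ undominated, where the trivial star at $v$ may vanish. Here I would pass to a neighbor $w$ of $v$, which exists since $G$ is connected and which lies outside $C$ because $v$ is undominated: since $v\in N(w)$ is deficient we have $-\Delta_w q(C)\ge q_C(w)+1$, and whether or not $w$ itself is dominated one checks $-\Delta_w f(C)\ge 1>0$. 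Every case contradicts the stopping rule, so $q(C)=0$ and $C$ is an $m$-fold dominating set.

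With $q(C)=0$ in hand, $C$ is in particular a dominating set, so I may invoke Lemma \ref{le1-14-3}: if $G[C]$ were disconnected, two nearest components would be at most three hops apart. I would take a shortest such path, whose internal vertices lie in $V\setminus C$, and split into two cases. If the path has length two, its midpoint $x$ is adjacent to two distinct components, so (using $-\Delta_x q(C)=0$) the trivial star at $x$ has $-\Delta_x f(C)=|NC_C(x)|-1\ge 1>0$. If the path is $c_1\,x\,y\,c_2$ of length three, I may assume $x$ touches exactly one component $C_1$ (otherwise the trivial star at $x$ is already positive). Consider the star $S_x=\{x,y\}$ with center $x$ and foot $y$. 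Then $-\Delta_x f(C)=0$, and because in $G[C\cup\{x\}]$ the vertex $y$ is adjacent both to the enlarged component $C_1\cup\{x\}$ and to the still-separate component $C_2$, one gets $-\Delta_y f(C\cup\{x\})=|NC_{C\cup\{x\}}(y)|-1\ge 1$, so $b_C^{S_x}(y)=1$ and $g_C(S_x)=0+1=1>0$. Again a contradiction, forcing $p(C)=1$; together with $q(C)=0$ this shows $C$ is a $(1,m)$CDS.

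I expect the delicate points to be twofold. The $m=1$ subcase of domination is the one place a single deficient node need not by itself produce a positive star, and the fix of hopping to a neighbor must be stated carefully so that the connectivity term $-\Delta_w p(C)$ cannot cancel the domination gain. The genuinely subtle step, however, is the length-three connectivity case: the profit comes not from the star center but from the foot $y$ through the bridging value $b_C^{S_x}(y)=1$, so I must verify precisely that adding $x$ keeps $C_1\cup\{x\}$ and $C_2$ as two different components of $G[C\cup\{x\}]$ to which $y$ is simultaneously adjacent — this is exactly what makes $-\Delta_y f(C\cup\{x\})\ge 1$ and hence the two-vertex star profitable.
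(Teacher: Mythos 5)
Your proof is correct and follows essentially the same route as the paper's: argue from the stopping rule that no star has positive gain, exhibit a positive trivial star whenever some node is under-dominated, and invoke Lemma \ref{le1-14-3} together with a one-node or two-node star to contradict disconnectedness of $G[C]$. You merely fill in details the paper leaves implicit (the $m=1$ sub-case handled by hopping to a neighbor, and the verification that $b_C^{S_x}(y)=1$ for the three-hop bridge), so the strategy is the same.
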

\begin{proof}
First, we show that $C$ is an $m$-DS of $G$. If not, then there exists a node $u\in V\setminus C$ with $q_C(u)>0$. If $u$ is adjacent with $C$, then $-\Delta_up(C)\geq 0$ and $-\Delta_uq(C)\geq q_C(u)>0$. In this case, $-\Delta_uf(C)>0$. If $u$ is not adjacent with $C$, since $G$ is connected, we may consider such $u$ which is adjacent with a node $v\in (V\setminus C)\cap N(C)$. In this case, $-\Delta_vp(C)\geq 0$ and $-\Delta_vq(C)>0$ (at least the covering requirement of $u$ is reduced by 1), and thus $-\Delta_uf(v)>0$. In any case, there is a node $x$ with $-\Delta_xf(C)>0$ and thus $S_x=x$ is a star with $g_C(S_x)=-\Delta_xf(C)>0$, which implies that Algorithm \ref{alg1} will not terminate. So, at the termination, $C$ is an $m$-DS.

Next, we show that $G[C]$ is connected. If not, then by Lemma \ref{le1-14-3}, there exists one node $u$ (or two adjacent nodes $u,v$) adding which can connect two components of $G[C]$. Such node $u$ (or adjacent nodes $u,v$) can be viewed as a star $S_u$ with $g_C(S_u)>0$. Hence the algorithm will not terminate if $G[C]$ is not connected.
\end{proof}

\begin{theorem}\label{lem11-24-1}
Let $C^*$ be an optimal solution to a MinW$(1,m)$-CDS instance on graph $G$,  and $C$ be the output of Algorithm \ref{alg1}. Then $c(C)\leq 2H(\delta_{\max}+m-1)c(C^*)$, where $H(\gamma)=\sum_{i=1}^{\gamma}1/i$ is the $\gamma$th Harmonic number and $\delta_{\max}$ is the maximum degree of $G$.
\end{theorem}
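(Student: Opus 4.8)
The plan is to run the standard greedy-for-submodular-cover analysis on the potential $f=p+q$, but to control the per-element charge by the \emph{maximum single-node gain} rather than by the initial value of the potential. Let $C_0=\emptyset\subset C_1\subset\cdots\subset C_T=C$ be the sets produced, where $C_t=C_{t-1}\cup S^{(t)}$ and $S^{(t)}$ is the star chosen in iteration $t$. By Lemma \ref{coro0328-1}, $g_{C_{t-1}}(S^{(t)})=-\Delta_{S^{(t)}}f(C_{t-1})=f(C_{t-1})-f(C_t)$, so each iteration lowers the integer-valued potential $f$ by exactly the gain of the chosen star, at cost-effectiveness $\theta_t=g_{C_{t-1}}(S^{(t)})/c(S^{(t)})$. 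Since $f(\emptyset)=p(\emptyset)+q(\emptyset)=0+nm=nm$ and, by Lemma \ref{le11-23-3}, the output $C$ is a $(1,m)$CDS with $f(C)=1$, the algorithm removes exactly $nm-1$ units of potential. I would pay for each removed unit the amount $1/\theta_t=c(S^{(t)})/g_{C_{t-1}}(S^{(t)})$, so that $c(C)=\sum_t c(S^{(t)})$ equals the sum of all unit prices.

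The heart of the argument is a per-step comparison with the optimum. Since $C^*$ is itself a $(1,m)$CDS and dominates $V$, the graph $G[C\cup C^*]$ is connected and has $q=0$, so $f(C\cup C^*)=1$ and hence $-\Delta_{C^*}f(C)=f(C)-1$. To turn this joint gain into a statement about individual stars, I would fix a spanning tree $T$ of the connected graph $G[C^*]$, root it, and decompose it into stars $\{S_j^*\}$, each consisting of a node together with its children in $T$. Because every node of $C^*$ is the centre of exactly one such star and a foot of at most one other, $\sum_j c(S_j^*)\le 2\,c(C^*)$ --- this is the source of the factor $2$. I would then establish the key inequality $\sum_j g_C(S_j^*)\ge f(C)-1$, so that by averaging some OPT star satisfies $g_C(S_j^*)/c(S_j^*)\ge (f(C)-1)/(2c(C^*))$, and the greedy star is at least this cost-effective.

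A naive summation of this per-step bound over the drop of $f$ from $nm$ to $1$ would only give $H(nm-1)$. To obtain $H(\delta_{\max}+m-1)$ I would instead charge the unit prices to the nodes of $C^*$ in the Wolsey/Chv\'atal manner: order the units for which a fixed OPT star $S_j^*$ is responsible by the time greedy removes them, and observe that at each such moment the residual gain of $S_j^*$ is at least the number of its still-unremoved units, so $\theta_t\ge(\text{residual count})/c(S_j^*)$ and the price of that unit is at most $c(S_j^*)/(\text{residual count})$. Summing over a star's units yields at most $c(S_j^*)H(r_j)$, where $r_j$ is the total number of units $S_j^*$ is ever responsible for. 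The decisive estimate is that this count never exceeds $\delta_{\max}+m-1$: for any node $v$ and set $C$, the neighbours of $v$ lying in $C$ (which alone can reduce $p$) are disjoint from the neighbours of $v$ outside $C$ that still need coverage (which alone can reduce $q$), so, writing $a$ and $b'$ for the sizes of these two disjoint classes,
\[
-\Delta_v f(C)=-\Delta_v q(C)-\Delta_v p(C)\le\big(q_C(v)+b'\big)+(a-1)\le m+\deg(v)-1\le \delta_{\max}+m-1 .
\]
Combining the factor-$2$ star decomposition with this per-node harmonic charge gives $c(C)\le 2H(\delta_{\max}+m-1)\,c(C^*)$.

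The genuinely delicate step, and the one I expect to be the main obstacle, is the inequality $\sum_j g_C(S_j^*)\ge f(C)-1$ together with the bookkeeping of ``residual responsibility'' needed for the harmonic charge. The difficulty is that $g$ is not the true marginal of $f$: the foot-contributions $b_C^{S_u}$ are capped at $1$ and vanish whenever $q_C(u_i)>0$, so $g_C(S_j^*)$ can strictly undercount $-\Delta_{S_j^*}f(C)$. I would control this by accounting for the coverage part and the connection part separately --- using the submodularity of $-q$ and its explicit derivative formula (as in Lemma \ref{le1-11-3}) for the $q$-units, and properties $(b_1),(b_2)$ of $-p$ in Lemma \ref{lem0330-1}, reinforced by the three-hop bound of Lemma \ref{le1-14-3}, for the $p$-units, so that the capped feet still collectively witness the connectivity reduction that OPT achieves. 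Making these two accountings consistent with a single ordering of the removed units, while keeping each OPT node's total responsibility below $\delta_{\max}+m-1$, is the technical crux on which the ratio rests.
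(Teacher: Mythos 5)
Your overall architecture matches the paper's proof: the same potential $f=p+q$, the same decomposition of a spanning tree of $G[C^*]$ into rooted stars $Y_j$ (giving the factor $2$ since each optimal node lies in at most two stars), the same per-star harmonic cap $m+\deg(u_j)-1\le\delta_{\max}+m-1$, and the same key inequality --- your $\sum_j g_{C}(S_j^*)\ge f(C)-1$ is exactly the paper's Claim~1, $\sum_{l=i}^{g}r_l\le\sum_{j=1}^{t}a_{i-1,j}$, and your ``residual responsibility'' bookkeeping is what the paper realizes as a backward-constructed injection $h:A\to B$. However, there is a genuine gap: you explicitly leave the key inequality unproven, flagging it as ``the technical crux,'' and the tools you propose for it would not suffice. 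The danger is precise: the gain of each optimal star $Y_j$ contains a built-in $-1$, because $-\Delta_{u_j}p(C_{i-1})=\lvert NC_{C_{i-1}}(u_j)\rvert-1$, so separately accounting the $q$-units (via submodularity and Lemma~\ref{le1-11-3}) and the $p$-units yields only $\sum_j g_{C_{i-1}}(Y_j)\ge f(C_{i-1})-t$, an error of $t$ per iteration --- exactly the accumulating error that the paper identifies as the fatal flaw in \cite{Zhou3}. The paper closes this deficit with a covering argument you do not have: defining $X_1$ (non-root nodes whose foot-value $b_{C_{i-1}}^{Y_{j^{(p)}}}(u_j)=1$), $X_2$ (nodes whose component neighborhood overlaps that of their tree-predecessors), and $X_3$ (nodes with $q_{C_{i-1}}(u_j)>0$), it proves $C^*\setminus\{u_1\}\subseteq X_1\cup X_2\cup X_3$, so that $\lvert X_1\rvert+\lvert X_2\rvert+\lvert X_3\rvert\ge t-1$ recovers the lost $t$ (up to the single unavoidable $-1$). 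The tree ordering of $C^*$ (parents before children) is essential to this relation, and Lemma~\ref{le1-14-3}, which you invoke for the $p$-units, plays no role in the ratio analysis at all --- the paper uses it only for feasibility (Lemma~\ref{le11-23-3}).

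A secondary gap: your cap on the responsibility count bounds only the center's marginal, $-\Delta_v f(C)\le m+\deg(v)-1$, using two disjoint neighbor classes. What is actually needed (the paper's Claim~2) is a bound on the full residual star gain $a_{i,j}=g_{C_i}(Y_j)$, which also includes the feet contributions $\sum_{v\in Y_j\setminus\{u_j\}}b_{C_i}^{Y_j}(v)$; this requires a three-way disjointness argument (the paper's $N_1,N_2,N_3$, where $N_3$ collects feet with $b$-value $1$), without which the number of harmonic slots per optimal star is not controlled. Both missing pieces are fixable by the paper's arguments, but as written your proposal reduces the theorem to precisely the two claims that constitute the paper's actual work.
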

\begin{proof}
Let $S_{1},S_{2},\ldots,S_{g}$ be the stars chosen by Algorithm \ref{alg1} in the order of their selection into set $C$. For $i=1,2,\ldots,g$, denote $C_i=S_{1}\cup S_{2}\cup\ldots\cup S_{i}$, and let  $C_0=\emptyset$. Furthermore, let $r_{i}=g_{C_{i-1}}(S_{i})$ and $w_{i}=\frac{c(S_{i})}{r_{i}}$. By Lemma \ref{coro0328-1}, we have
\begin{equation}\label{eq0417-1}
r_{i}=g_{C_{i-1}}(S_{i})=-\Delta_{S_{i}}f(C_{i-1}).
\end{equation}

Suppose $\mid C^*\mid=t$ and $T$ is a spanning tree of $G[C^*]$. Order nodes in $C^*$ as $u_1,\ldots,u_t$ such that a parent is ordered before its children, and brothers are ordered in non-decreasing order of costs. For $i=1,2,\ldots,t$, denote $C_i^*=\{u_1,\ldots,u_i\}$, and let $C_0^*=\emptyset$. Furthermore, let $Y_i$ be the sub-star of $T$ rooted at $u_i$. Then, $T$ is divided into the union of stars $T=Y_1\cup Y_2\cup \cdots \cup Y_t$.

For $i\in\{1,\ldots,g\}$ and $j\in\{1,\ldots,t\}$, let $a_{i,j}=g_{C_{i}}(Y_{j})$ and $w_{i,1}=\ldots=w_{i,r_{i}}=w_{i}$.
For any integer $1\leq \ell\leq a_{0,j}$, denote $b_{j,\ell}=\frac{c(Y_{j})}{\ell}$.

{\bf The idea for the following proof.}
Let $A=\bigcup_{i=1}^gA_{i}$ with $A_{i}=\{w_{i,1},\ldots,w_{i,r_{i}}\}$, and $B=\{b_{1,1},\ldots,b_{1,a_{0,1}},$ $ b_{2,1},\ldots,b_{2,a_{0,2}},\ldots,b_{t,1},\ldots,b_{t,a_{0,t}}\}$.
If
\begin{align}\label{eq0328-2}
\mbox{there is an injective mapping $h:A\rightarrow B$ } \mbox{such that $w\leq h(w)$ for any } w\in A,
\end{align}
then we shall have
\begin{align*}
c(C_{g})&=\sum_{i=1}^{g}c(S_{i})=\sum_{w\in A}w\leq\sum_{w\in A}h(w)\leq \sum_{j=1}^{t}\sum_{\ell=1}^{a_{0,j}}b_{j,\ell}=\sum_{j=1}^{t}H(a_{0,j})c(Y_{j}),
\end{align*}
where the second equality holds because $c(S_i)=w_ir_i=\sum_{w\in A_i}w$, and the second inequality holds because of ``injection''. Note that $a_{0,j}=g_{\emptyset}(Y_j)$, and $b_\emptyset^{Y_j}(v)=0$ holds for any $v\in Y_j\setminus \{u_j\}$ (since $q_{\emptyset}(v)=m>0$). So,
\begin{equation}\label{eq230221}
a_{0,j}=-\Delta_{u_j}f(\emptyset)=m+deg(u_j)-1\leq m+\delta_{\max}-1.
\end{equation}
Combining this with the fact that every node of $C^{\star}$ appears in at most two $Y_j$'s, we have
$$
c(C_{g})\leq H(\delta_{\max}+m-1)\sum_{j=1}^{t}c(Y_{j})\leq 2H(\delta_{\max}+m-1)c(C^{\star}).
$$

{\bf Constructing a mapping $h$ satisfying \eqref{eq0328-2}.} The construction is based on the following two claims.

{\bf Claim 1.} For any $1\leq i\leq g$,
$\sum_{l=i}^{g}r_l\leq \sum_{j=1}^{t}a_{i-1,j}$.

Using \eqref{eq0417-1} and the fact $f(C_g)=1$, the left-hand side can be written as
\begin{align}\label{eq9-4-3}
\sum_{l=i}^{g}r_{l} & =\sum_{l=i}^{g}(-\Delta_{S_{l}}f(C_{l-1}))=\sum_{l=i}^g(f(C_{l-1})-f(C_l))\nonumber\\
& =f(C_{i-1})-f(C_g)=f(C_{i-1})-1  \nonumber\\
& =q(C_{i-1})+p(C_{i-1})-1.
\end{align}
For each $j\in\{2,\ldots,t\}$, denote by $j^{(p)}$ the index for the parent of node $u_j$ in tree $T$ (superscript $(p)$ indicates ``parent''). Then the right-hand side can be written as
\begin{align}\label{eq11-4-3}
\sum_{j=1}^{t}a_{i-1,j} &=\sum_{j=1}^{t}g_{C_{i-1}}(Y_{j})\notag\\
&=\sum_{j=1}^{t}(-\Delta_{u_{j}}q(C_{i-1})+\mid NC_{C_{i-1}}(u_{j})\mid -1+\sum_{u\in V(Y_{j})\setminus\{u_{j}\} }b_{C_{i-1}}^{Y_j}(u)) \notag\\
&=\sum_{j=1}^{t}(-\Delta_{u_{j}}q(C_{i-1}))+\sum_{j=1}^{t}(\mid NC_{C_{i-1}}(u_{j})\mid)  -t+\sum_{j=2}^{t}b_{C_{i-1}}^{Y_{j^{(p)}}}(u_{j})
\end{align}
where the second equality uses expression \eqref{eq0407-1}.


Let $X_{1}=\{u_{j}\in C^*\setminus \{u_{1}\}:b_{C_{i-1}}^{Y_{j^{(p)}}}(u_{j})=1\}$, $X_{2}=\{u_{j}\in C^*\setminus \{u_{1}\}:NC_{C_{i-1}}(u_{j})\cap NC_{C_{i-1}}(C_{j-1}^*)\neq \emptyset\}$, $X_{3}=\{u_{j}\in C^{\star}\setminus \{u_{1}\}:q_{C_{i-1}}(u_{j})>0\}$. Observe that
\begin{equation}\label{eq0402-7}
C^*\setminus\{u_1\}\subseteq X_1\cup X_2\cup X_3.
\end{equation}
In fact, for any node $u_{j}\in C^*\setminus\{u_1\}$, if $u_j\not\in X_1\cup X_3$, then $b_{C_{i-1}}^{Y_{j^{(p)}}}(u_{j})=0$ and $q_{C_{i-1}}(u_{j})=0$. Note that $q_{C_{i-1}}(u_{j})=0$ implies that $NC_{C_{i-1}}(u_j)\neq\emptyset$. In order that $b_{C_{i-1}}^{Y_{j^{(p)}}}(u_j)=0$, we have $-\Delta_{u_j}f(C_{i-1}\cup prec(u_j))=0$. Then by Lemma \ref{eq0402-6}, $-\Delta_{u_j}p(C_{i-1}\cup prec(u_j))=0$, which implies that any component in $NC_{C_{i-1}}(u_j)$ is adjacent with a node in $prec(u_j)\subseteq C_{j-1}^*$. Hence $u_j\in X_2$, relation \eqref{eq0402-7} is proved.

As a consequence of \eqref{eq0402-7}, we have
\begin{equation}\label{eq8-11-3}
\mid X_{1}\mid +\mid X_{2}\mid +\mid X_{3}\mid\geq t-1.
\end{equation}
Furthermore, by definition, we have
\begin{equation}\label{eq0402-8}
\sum_{j=2}^{t}b_{C_{i-1}}^{Y_{j^{(p)}}}(u_{j})=\mid X_1\mid.
\end{equation}

By the submodularity of $-q$, for any $u_j\in C^*\setminus\{u_1\}$,
\begin{equation}\label{eq0403-1}
-\Delta_{C_{j-1}^*}q(C_{i-1})
+(-\Delta_{u_{j}}q(C_{i-1}))\geq -\Delta_{C_j^*}q(C_{i-1}).
\end{equation}
Furthermore, for any node $u_{j}\in X_{3}$, by Lemma \ref{le1-11-3} and because $u_j$ is adjacent with $u_{j^{(p)}}\in C_{j-1}^*$, we have
\begin{equation}\label{eq0403-2}
-\Delta_{C_{j-1}^*}q(C_{i-1})
+(-\Delta_{u_{j}}q(C_{i-1}))\geq -\Delta_{C_j^*}q(C_{i-1})+1.
\end{equation}
Inequalities \eqref{eq0403-1} and \eqref{eq0403-2} can be unified as
$$
-\Delta_{C_{j-1}^*}q(C_{i-1})
+(-\Delta_{u_{j}}q(C_{i-1}))\geq -\Delta_{C_j^*}q(C_{i-1})+{\bf 1}_{u_j\in X_3},
$$
where ${\bf 1}_{u_j\in X_3}$ is the indicator of whether $u_j\in X_3$. Then,
\begin{align}\label{eq9-11-3}
 \sum_{j=1}^{t}(-\Delta_{u_{j}}q(C_{i-1}))
 \geq & \sum_{j=1}^t\big(-\Delta_{C_j^*}q(C_{i-1})+\Delta_{C_{j-1}^*}q(C_{i-1})\big)+\mid X_3\mid \notag\\
 = & -\Delta_{C^*_t}q(C_{i-1})+\Delta_{\emptyset}q(C_{i-1})+\mid X_3\mid\notag\\
 = &\ q(C_{i-1})+\mid X_3\mid,
\end{align}
where the last equality uses the fact $q(C_t^*\cup C_{i-1})=0$ and $-\Delta_{\emptyset}q(C_{i-1})=0$.

For the second item of expression (\ref{eq11-4-3}), using the fact $\mid NC_{C_{i-1}}(C^*)\mid=p(C_{i-1})$ (since $C^*$ dominates every node of $C_{i-1}$) and the fact $\mid NC_{C_{i-1}}(u_{j})\cap NC_{C_{i-1}}(C_{j-1}^*)\mid\geq 1$ for any $u_j\in X_2$ (by the definition of $X_2$), we have
\begin{align}\label{eq10-11-3}
 & \sum_{j=1}^{t}\mid NC_{C_{i-1}}(u_{j})\mid \notag\\
=\ &\sum_{j=1}^{t}(\mid NC_{C_{i-1}}(u_{j})\setminus NC_{C_{i-1}}(C_{j-1}^*)\mid
+\mid NC_{C_{i-1}}(u_{j})\cap NC_{C_{i-1}}(C_{j-1}^*)\mid) \notag\\
\geq\ & \mid NC_{C_{i-1}}(C^*)\mid+\mid X_2\mid=p(C_{i-1})+\mid X_{2}\mid .
\end{align}

Combining inequalities \eqref{eq9-4-3}, (\ref{eq8-11-3}), \eqref{eq0402-8}, (\ref{eq9-11-3}) and (\ref{eq10-11-3}), Claim 1 is proved.

{\bf Claim 2.} $a_{i,j}\leq a_{0,j}$ for any $1\leq i\leq g$ and $1\leq j\leq t$.

Let $N_{1}(u_{j})=\{v\in N(u_{j}):q_{C_{i}}(v)>0\}$, $N_{2}(u_{j})=N(u_{j})\cap C_{i}$, and $N_{3}(u_{j})=\{v\in Y_j:b_{C_{i}}^{Y_{j}}(v)=1\}$. Then $-\Delta_{u_j}q(C_i)=q_{C_i}(u_j)+\mid N_1(u_j)\mid$, $\mid NC_{C_i}(u_{j})\mid \leq \mid N_2(u_j)\mid $, and $\sum_{v\in V(Y_{j})\backslash \{u_{j}\}}b_{C_{i}}^{Y_{j}}(v)=\mid N_3(u_j)\mid $. Notice that $N_{1}(u_{j})$, $N_{2}(u_{j})$, and $N_{3}(u_{j})$ are mutually disjoint. In fact, since $N_3(u_j)\subseteq Y_j\subseteq V\setminus C_i$, we have $N_2(u_j)\cap N_3(u_j)=\emptyset$. By the definition of $b_C^{S_u}(v)$ in \eqref{eq3-15-3}, we have $N_1(u_j)\cap N_3(u_j)=\emptyset$. Since any node $v\in C_i$ has $q_{C_i}(v)=0$, so $N_1(u_j)\cap N_2(u_j)=\emptyset$. Hence $\mid N_1(u_j)\mid +\mid N_2(u_j)\mid +\mid N_3(u_j)\mid \leq deg(u_j)$. Then
\begin{align*}
a_{i,j}&= g_{C_{i}}(Y_{j})=-\Delta_{u_{j}}f(C_{i})+\sum_{v\in V(Y_{j})\backslash \{u_{j}\}}b_{C_{i}}^{Y_{j}}(v)\nonumber\\
&=-\Delta_{u_{j}}q(C_{i})+(\mid NC_{C_i}(u_{j})\mid -1)+\sum_{v\in V(Y_{j})\backslash \{u_{j}\}}b_{C_{i}}^{Y_{j}}(v)\nonumber\\
&\leq (q_{C_{i}}(u_{j})+\mid N_{1}(u_{j})\mid )+(\mid N_{2}(u_{j})\mid -1)+\mid N_{3}(u_{j})\mid \nonumber\\
&\leq q_{C_{i}}(u_{j})+deg(u_j)-1\nonumber\\
&\leq m+deg(u_j)-1=a_{0,j},
\end{align*}
where the last equality uses \eqref{eq230221}. Claim 2 is proved.

{\bf Finishing the construction of $h$ satisfying \eqref{eq0328-2}:}
For $i\in\{1,\ldots,g\}$, let $B_{i}=\bigcup_{j=1}^{t}\{b_{j,1},$ $b_{j,2},\ldots,b_{j,a_{i-1,j}}\}$. By Claim 2, every $B_i$ is well defined and $B_i\subseteq B$. By the greedy choice of $S_{i}$, we have $w_{i}=\frac{c(S_{i})}{r_{i}}\leq \frac{c(Y_{j})}{a_{i-1,j}}$ $(\forall 1\leq j\leq t)$. So,
\begin{equation}\label{eq0403-6}
\mbox{$w_{i}\leq b$ holds for any $b\in B_{i}$.}
\end{equation}
Next, we show that there exists an injection $h$ on $A$ such that
\begin{equation}\label{eq0615-1}
h(A_i)\subseteq B_i\setminus \bigcup_{\ell=i+1}^gh(A_\ell)\ \mbox{for any $i=g,g-1,\ldots,1$,}
\end{equation}
This can be proved by induction on $i$ from $g$ down to $1$. First, using Claim 1 for $i=g$, we have $\mid B_g\mid =\sum_{j=1}^ta_{g-1,j}\geq r_g=\mid A_g\mid $. So, an injection from $A_g$ into $B_g$ exists. Suppose we have established an injection $h$ from $\bigcup_{x=i+1}^g A_x$ into $\bigcup_{x=i+1}^gB_x$ with $h(A_x)\subseteq B_x\setminus \bigcup_{\ell=x+1}^gh(A_\ell)$ for any $x\in\{i+1,\ldots,g\}$. By $\mid B_i\setminus \bigcup_{\ell=i+1}^gh(A_\ell)\mid \geq \sum_{j=1}^ta_{i-1,j}-\sum_{\ell=i+1}^g\mid A_\ell\mid =\sum_{j=1}^ta_{i-1,j}-\sum_{\ell=i+1}^gr_\ell\geq r_i=\mid A_i\mid $, an injection from $A_i$ into $B_i\setminus \bigcup_{\ell=i+1}^gf(A_\ell)$ exists. When $i$ reaches $1$, an injection $h$ satisfying \eqref{eq0615-1} is established. Combining \eqref{eq0615-1} with \eqref{eq0403-6}, an injection $h$ satisfying \eqref{eq0328-2} is found, and the theorem is proved.
\end{proof}

\section{Conclusion and Discussion}\label{sec.5}

CDSs were proposed by Das and Bhargharan \cite{Das} and Ephremides et al. \cite{Ephremides} to serve as virtual backbones in WSNs. There exist many results on CDS in the literature.

In unweighted case, the MinCDS problem in a general graph has received a sequence of efforts \cite{Guha,Ruan,Du}.
The best approximation ratio is $(\ln\delta_{\max}+2)$ in \cite{Ruan} or $(1+\varepsilon)\ln(\delta_{\max}-1)$ in \cite{Du},
where $\varepsilon$ is an arbitrary positive real number. The MinCDS in UDG has polynomial-time approximation shcemes (PTASs) \cite{Cheng,Zhang1}. For the fault-tolerant Min$(k,m)$CDS problem in general graphs, asymptotically tight approximations have been obtained for $k=1,2,3$ and $m\geq k$ \cite{Shi1,Zhou2,Zhou1}. For general constants $m\geq k$, a $(2k-1)\ln \delta_{\max}$-approximation algorithm was proposed by Zhang et al. \cite{Zhang}. For the Min$(k,m)$CDS problem in UDGs, constant approximations have been developed \cite{Shang,Shi1,Wang2,Zhou2}. As for the weighted version of fault-tolerant virtual backbones, Shi
et al. \cite{Shi2} and Fukunaga \cite{Fukunaga} independently
presented constant approximation algorithms for the MinW$(k,m)$CDS
problem in UDGs, and Nutov \cite{Nutov2} proposed an $O(k\ln n)$-approximation algorithm for general graphs, where the constant in the big $O$ is at least 10. A question is: can the constant in $O$ be further reduced?

In weighted case, the MinWCDS problem in UDGs has several constant-approximations \cite{Christoph,Huang,Dai,Erlebach,Zou}. However, it is still open whether there exists a PTAS.
More information can be found in \cite{DuWan,WZLD,ZHANG3}.
For general graphs, progress on the MinWCDS problem is slow. In 1999, Guha and Khuller \cite{Guha1} proposed a $(1.35+\varepsilon)\ln n$-approximation algorithm. Until 2018, \cite{Zhou3} presented an
asymptotic $3\ln\delta_{\max}$-approximation algorithm. However, the analysis in \cite{Zhou3}
contains a flaw. Actually, an inequality in their derivation contains a small error term 1. This small error accumulates to an uncontrollable error in the total weight. Existing methods seem unable to correct this flaw. This is the motivation for the current paper.


In this paper, we presented a $2H(\delta_{\max}+m-1)$-approximation algorithm for the MinW$(1,m)$-CDS problem in a general graph. Unlike the algorithm in \cite{Zhou3}, ours is a one-phase greedy algorithm, where a most cost-effective star is selected in each iteration. The effectiveness of a star is measured by a delicately designed potential function.

There are two difficulties addressed. First, since the number of stars is exponential, identifying a most cost-effective star efficiently is challenging. We showed that under our potential function, a most cost-effective star has a special structure and thus can be found in polynomial time. Second, the potential function is not submodular, it eludes existing techniques used in submodular optimization. Although our previous works \cite{Shi1,Zhou2,Zhou1} successfully dealt with some cases of this problem for the {\em cardinality} version, those techniques cannot deal with the {\em weighted} version. A small error in the potential function makes the weight accumulate to an uncontrollable amount. In this paper, we proposed an amortized analysis, showing that although large errors are inevitable in some steps, they can be compensated overall. The crucial part is to establish an injective mapping from {\em fragments} of the computed solution to the fragments of the optimal solution so that such compensation is possible.

\section*{Acknowledgment}
This research is supported in part by National Natural Science Foundation of China (U20A2068) and NSF of USA under grant III-1907472.


\begin{thebibliography}{28}
\ifx \bisbn   \undefined \def \bisbn  #1{ISBN #1}\fi
\ifx \binits  \undefined \def \binits#1{#1}\fi
\ifx \bauthor  \undefined \def \bauthor#1{#1}\fi
\ifx \batitle  \undefined \def \batitle#1{#1}\fi
\ifx \bjtitle  \undefined \def \bjtitle#1{#1}\fi
\ifx \bvolume  \undefined \def \bvolume#1{\textbf{#1}}\fi
\ifx \byear  \undefined \def \byear#1{#1}\fi
\ifx \bissue  \undefined \def \bissue#1{#1}\fi
\ifx \bfpage  \undefined \def \bfpage#1{#1}\fi
\ifx \blpage  \undefined \def \blpage #1{#1}\fi
\ifx \burl  \undefined \def \burl#1{\textsf{#1}}\fi
\ifx \doiurl  \undefined \def \doiurl#1{\url{https://doi.org/#1}}\fi
\ifx \betal  \undefined \def \betal{\textit{et al.}}\fi
\ifx \binstitute  \undefined \def \binstitute#1{#1}\fi
\ifx \binstitutionaled  \undefined \def \binstitutionaled#1{#1}\fi
\ifx \bctitle  \undefined \def \bctitle#1{#1}\fi
\ifx \beditor  \undefined \def \beditor#1{#1}\fi
\ifx \bpublisher  \undefined \def \bpublisher#1{#1}\fi
\ifx \bbtitle  \undefined \def \bbtitle#1{#1}\fi
\ifx \bedition  \undefined \def \bedition#1{#1}\fi
\ifx \bseriesno  \undefined \def \bseriesno#1{#1}\fi
\ifx \blocation  \undefined \def \blocation#1{#1}\fi
\ifx \bsertitle  \undefined \def \bsertitle#1{#1}\fi
\ifx \bsnm \undefined \def \bsnm#1{#1}\fi
\ifx \bsuffix \undefined \def \bsuffix#1{#1}\fi
\ifx \bparticle \undefined \def \bparticle#1{#1}\fi
\ifx \barticle \undefined \def \barticle#1{#1}\fi
\ifx \bconfdate \undefined \def \bconfdate #1{#1}\fi
\ifx \botherref \undefined \def \botherref #1{#1}\fi
\ifx \url \undefined \def \url#1{\textsf{#1}}\fi
\ifx \bchapter \undefined \def \bchapter#1{#1}\fi
\ifx \bbook \undefined \def \bbook#1{#1}\fi
\ifx \bcomment \undefined \def \bcomment#1{#1}\fi
\ifx \oauthor \undefined \def \oauthor#1{#1}\fi
\ifx \citeauthoryear \undefined \def \citeauthoryear#1{#1}\fi
\ifx \endbibitem  \undefined \def \endbibitem {}\fi
\ifx \bconflocation  \undefined \def \bconflocation#1{#1}\fi
\ifx \arxivurl  \undefined \def \arxivurl#1{\textsf{#1}}\fi
\csname PreBibitemsHook\endcsname

\bibitem{Das}
\begin{bchapter}
\bauthor{\bsnm{Das}, \binits{B.}},
\bauthor{\bsnm{Bharghavan}, \binits{V.}}:
\bctitle{Routing in ad-hoc networks using minimum connected dominating sets}.
In: \bbtitle{ICC'97 - Proceedings of International Conference on
  Communications},
vol. \bseriesno{1},
pp. \bfpage{376}--\blpage{380}
(\byear{1997}).
\doiurl{10.1109/ICC.1997.605303}
\end{bchapter}
\endbibitem

\bibitem{Ephremides}
\begin{barticle}
\bauthor{\bsnm{Ephremides}, \binits{A.}},
\bauthor{\bsnm{Wieselthier}, \binits{J.E.}},
\bauthor{\bsnm{Baker}, \binits{D.J.}}:
\batitle{A design concept for reliable mobile radio networks with frequency
  hopping signaling}.
\bjtitle{Proceedings of the IEEE}
\bvolume{75}(\bissue{1}),
\bfpage{56}--\blpage{73}
(\byear{1987}).
\doiurl{10.1109/PROC.1987.13705}
\end{barticle}
\endbibitem

\bibitem{Dai}
\begin{barticle}
\bauthor{\bsnm{Dai}, \binits{F.}},
\bauthor{\bsnm{Wu}, \binits{J.}}:
\batitle{On constructing $k$-connected $k$-dominating set in wireless ad hoc
  and sensor networks}.
\bjtitle{Journal of Parallel and Distributed Computing}
\bvolume{66}(\bissue{7}),
\bfpage{947}--\blpage{958}
(\byear{2006}).
\doiurl{10.1016/j.jpdc.2005.12.010}
\end{barticle}
\endbibitem

\bibitem{Christoph}
\begin{bchapter}
\bauthor{\bsnm{Amb\"{u}hl}, \binits{C.}},
\bauthor{\bsnm{Erlebach}, \binits{T.}},
\bauthor{\bsnm{Mihal\'{a}k}, \binits{M.}},
\bauthor{\bsnm{Nunkesser}, \binits{M.}}:
\bctitle{Constant-factor approximation for minimum-weight (connected)
  dominating sets in unit disk graphs}.
In: \beditor{\bsnm{Broy}, \binits{M.}},
\beditor{\bsnm{Denert}, \binits{E.}} (eds.)
\bbtitle{APPROX'06/RANDOM'06},
pp. \bfpage{3}--\blpage{14}.
\bpublisher{Springer},
\blocation{Barcelona, Spain}
(\byear{2006}).
\doiurl{10.1007/11830924_3}
\end{bchapter}
\endbibitem

\bibitem{Erlebach}
\begin{bchapter}
\bauthor{\bsnm{Erlebach}, \binits{T.}},
\bauthor{\bsnm{Mihal\'{a}k}, \binits{M.}}:
\bctitle{A $(4+\varepsilon)$-approximation for the minimum-weight dominating
  set problem in unit disk graphs}.
In: \bbtitle{WAOA'09 - Proceedings of the 7th International Conference on
  Approximation and Online Algorithms},
pp. \bfpage{135}--\blpage{146}.
\bpublisher{Springer},
\blocation{Copenhagen, Denmark}
(\byear{2009}).
\doiurl{10.1007/978-3-642-12450-1_13}
\end{bchapter}
\endbibitem

\bibitem{Huang}
\begin{barticle}
\bauthor{\bsnm{Huang}, \binits{Y.}},
\bauthor{\bsnm{Gao}, \binits{X.}},
\bauthor{\bsnm{Zhang}, \binits{Z.}},
\bauthor{\bsnm{Wu}, \binits{W.}}:
\batitle{A better constant-factor approximation for weighted dominating set in
  unit disk graph}.
\bjtitle{Journal of Combinatorial Optimization}
\bvolume{18}(\bissue{2}),
\bfpage{179}--\blpage{194}
(\byear{2009}).
\doiurl{10.1007/s10878-008-9146-0}
\end{barticle}
\endbibitem

\bibitem{Li3}
\begin{bchapter}
\bauthor{\bsnm{Li}, \binits{J.}},
\bauthor{\bsnm{Jin}, \binits{Y.}}:
\bctitle{A ptas for the weighted unit disk cover problem}.
In: \beditor{\bsnm{Halld{\'{o}}rsson}, \binits{M.M.}},
\beditor{\bsnm{Iwama}, \binits{K.}},
\beditor{\bsnm{Kobayashi}, \binits{N.}},
\beditor{\bsnm{Speckmann}, \binits{B.}} (eds.)
\bbtitle{ICALP 2015 - Proceedings 42nd International Colloquium Automata,
  Languages, and Programming},
pp. \bfpage{898}--\blpage{909}.
\bpublisher{Springer},
\blocation{Kyoto, Japan}
(\byear{2015}).
\doiurl{10.1007/978-3-662-47672-7_73}
\end{bchapter}
\endbibitem

\bibitem{Guha1}
\begin{barticle}
\bauthor{\bsnm{Guha}, \binits{S.} \bsuffix{Sudiptoand~Khuller}}:
\batitle{Improved methods for approximating node weighted steiner trees and
  connected dominating sets}.
\bjtitle{Information and Computation}
\bvolume{150}(\bissue{1}),
\bfpage{57}--\blpage{74}
(\byear{1999}).
\doiurl{10.1006/inco.1998.2754}
\end{barticle}
\endbibitem

\bibitem{Zhou3}
\begin{barticle}
\bauthor{\bsnm{Zhou}, \binits{J.}},
\bauthor{\bsnm{Zhang}, \binits{Z.}},
\bauthor{\bsnm{Tang}, \binits{S.}},
\bauthor{\bsnm{Huang}, \binits{X.}},
\bauthor{\bsnm{Du}, \binits{D.-Z.}}:
\batitle{Breaking the $o(\ln n)$ barrier: An enhanced approximation algorithm for fault-tolerant minimum weight connected dominating set}.
\bjtitle{INFORMS Journal on Computing}
\bvolume{30}(\bissue{2}),
\bfpage{225}--\blpage{235}
(\byear{2018}).
\doiurl{10.1287/ijoc.2017.0775}
\end{barticle}
\endbibitem

\bibitem{Du1}
\begin{bbook}
\bauthor{\bsnm{Du}, \binits{D.-Z.}},
\bauthor{\bsnm{Ko}, \binits{K.-I.}},
\bauthor{\bsnm{Hu}, \binits{X.}}:
\bbtitle{Design and Analysis of Approximation Algorithms}.
\bsertitle{Springer Optimization and Its Applications, SOIA, volume 62}.
\bpublisher{Springer},
\blocation{New York}
(\byear{2012}).
\doiurl{10.1007/978-1-4614-1701-9}
\end{bbook}
\endbibitem

\bibitem{Wan1}
\begin{barticle}
\bauthor{\bsnm{Wan}, \binits{P.-J.}},
\bauthor{\bsnm{Alzoubi}, \binits{K.M.}},
\bauthor{\bsnm{Frieder}, \binits{O.}}:
\batitle{Distributed construction of connected dominating set in wireless ad
  hoc networks}.
\bjtitle{Mobile Networks and Applications}
\bvolume{9}(\bissue{2}),
\bfpage{141}--\blpage{149}
(\byear{2004}).
\doiurl{10.1023/B:MONE.0000013625.87793.13}
\end{barticle}
\endbibitem

\bibitem{Guha}
\begin{barticle}
\bauthor{\bsnm{Guha}, \binits{S.}},
\bauthor{\bsnm{Khuller}, \binits{S.}}:
\batitle{Approximation algorithms for connected dominating sets}.
\bjtitle{Algorithmica}
\bvolume{20}(\bissue{4}),
\bfpage{374}--\blpage{387}
(\byear{1998}).
\doiurl{10.1007/PL00009201}
\end{barticle}
\endbibitem

\bibitem{Ruan}
\begin{barticle}
\bauthor{\bsnm{Ruan}, \binits{L.}},
\bauthor{\bsnm{Du}, \binits{H.}},
\bauthor{\bsnm{Jia}, \binits{X.}},
\bauthor{\bsnm{Wu}, \binits{W.}},
\bauthor{\bsnm{Li}, \binits{Y.}},
\bauthor{\bsnm{Ko}, \binits{K.-I.}}:
\batitle{A greedy approximation for minimum connected dominating sets}.
\bjtitle{Theoretical Computer Science}
\bvolume{329}(\bissue{1-3}),
\bfpage{325}--\blpage{330}
(\byear{2004}).
\doiurl{10.1016/j.tcs.2004.08.013}
\end{barticle}
\endbibitem

\bibitem{Du}
\begin{bchapter}
\bauthor{\bsnm{Du}, \binits{D.-Z.}},
\bauthor{\bsnm{Graham}, \binits{R.L.}},
\bauthor{\bsnm{Pardalos}, \binits{P.M.}},
\bauthor{\bsnm{Wan}, \binits{P.-J.}},
\bauthor{\bsnm{Wu}, \binits{W.}},
\bauthor{\bsnm{Zhao}, \binits{W.}}:
\bctitle{Analysis of greedy approximations with nonsubmodular potential
  functions}.
In: \beditor{\bsnm{Teng}, \binits{S.-H.}} (ed.)
\bbtitle{SODA 2008 - Proceedings of the Nineteenth Annual ACM-SIAM Symposium on
  Discrete Algorithms},
pp. \bfpage{167}--\blpage{175}.
\bpublisher{SIAM},
\blocation{San Francisco, California, USA}
(\byear{2008}).
\doiurl{10.5555/1347082.1347101}
\end{bchapter}
\endbibitem

\bibitem{Cheng}
\begin{barticle}
\bauthor{\bsnm{Cheng}, \binits{X.}},
\bauthor{\bsnm{Huang}, \binits{X.}},
\bauthor{\bsnm{Li}, \binits{D.}},
\bauthor{\bsnm{Wu}, \binits{W.}},
\bauthor{\bsnm{Du}, \binits{D.-Z.}}:
\batitle{A polynomial-time approximation scheme for the minimum-connected
  dominating set in ad hoc wireless networks}.
\bjtitle{Networks}
\bvolume{42}(\bissue{4}),
\bfpage{202}--\blpage{208}
(\byear{2010}).
\doiurl{10.1002/net.10097}
\end{barticle}
\endbibitem

\bibitem{Zhang1}
\begin{barticle}
\bauthor{\bsnm{Zhang}, \binits{Z.}},
\bauthor{\bsnm{Gao}, \binits{X.}},
\bauthor{\bsnm{Wu}, \binits{W.}},
\bauthor{\bsnm{Du}, \binits{D.-Z.}}:
\batitle{A ptas for minimum connected dominating set in 3-dimensional wireless
  sensor networks}.
\bjtitle{Journal of Global Optimization}
\bvolume{45}(\bissue{3}),
\bfpage{451}--\blpage{458}
(\byear{2009}).
\doiurl{10.1007/s10898-008-9384-9}
\end{barticle}
\endbibitem

\bibitem{Shi1}
\begin{barticle}
\bauthor{\bsnm{Shi}, \binits{Y.}},
\bauthor{\bsnm{Zhang}, \binits{Y.}},
\bauthor{\bsnm{Zhang}, \binits{Z.}},
\bauthor{\bsnm{Wu}, \binits{W.}}:
\batitle{A greedy algorithm for the minimum 2-connected m-fold dominating set
  problem}.
\bjtitle{Journal of Combinatorial Optimization}
\bvolume{31}(\bissue{1}),
\bfpage{136}--\blpage{151}
(\byear{2016}).
\doiurl{10.1007/s10878-014-9720-6}
\end{barticle}
\endbibitem

\bibitem{Zhou2}
\begin{barticle}
\bauthor{\bsnm{Zhou}, \binits{J.}},
\bauthor{\bsnm{Zhang}, \binits{Z.}},
\bauthor{\bsnm{Tang}, \binits{S.}},
\bauthor{\bsnm{Huang}, \binits{X.}},
\bauthor{\bsnm{Mo}, \binits{Y.}},
\bauthor{\bsnm{Du}, \binits{D.-Z.}}:
\batitle{Fault-tolerant virtual backbone in heterogeneous wireless sensor
  network}.
\bjtitle{IEEE/ACM Transactions on Networking}
\bvolume{25}(\bissue{6}),
\bfpage{3487}--\blpage{3499}
(\byear{2017}).
\doiurl{10.1109/TNET.2017.2740328}
\end{barticle}
\endbibitem

\bibitem{Zhou1}
\begin{barticle}
\bauthor{\bsnm{Zhou}, \binits{J.}},
\bauthor{\bsnm{Zhang}, \binits{Z.}},
\bauthor{\bsnm{Wu}, \binits{W.}},
\bauthor{\bsnm{Xing}, \binits{K.}}:
\batitle{A greedy algorithm for the fault-tolerant connected dominating set in a general graph}.
\bjtitle{Journal of Combinatorial Optimization}
\bvolume{28}(\bissue{1}),
\bfpage{310}--\blpage{319}
(\byear{2014}).
\doiurl{10.1007/s10878-013-9638-4}
\end{barticle}
\endbibitem

\bibitem{Zhang}
\begin{barticle}
\bauthor{\bsnm{Zhang}, \binits{Z.}},
\bauthor{\bsnm{Zhou}, \binits{J.}},
\bauthor{\bsnm{Tang}, \binits{S.}},
\bauthor{\bsnm{Huang}, \binits{X.}},
\bauthor{\bsnm{Du}, \binits{D.-Z.}}:
\batitle{Computing minimum $k$-connected $m$-fold dominating set in general
  graphs}.
\bjtitle{INFORMS Journal on Computing}
\bvolume{30}(\bissue{2}),
\bfpage{217}--\blpage{224}
(\byear{2018}).
\doiurl{10.1287/ijoc.2017.0776}
\end{barticle}
\endbibitem

\bibitem{Shang}
\begin{barticle}
\bauthor{\bsnm{Shang}, \binits{W.}},
\bauthor{\bsnm{Yao}, \binits{F.}},
\bauthor{\bsnm{Wan}, \binits{P.}},
\bauthor{\bsnm{Hu}, \binits{X.}}:
\batitle{On minimum m-connected k-dominating set problem in unit disc graphs}.
\bjtitle{Journal of Combinatorial Optimization}
\bvolume{16}(\bissue{2}),
\bfpage{99}--\blpage{106}
(\byear{2008}).
\doiurl{10.1007/s10878-007-9124-y}
\end{barticle}
\endbibitem

\bibitem{Wang2}
\begin{barticle}
\bauthor{\bsnm{Wang}, \binits{W.}},
\bauthor{\bsnm{Kim}, \binits{D.}},
\bauthor{\bsnm{An}, \binits{M.K.}},
\bauthor{\bsnm{Gao}, \binits{W.}},
\bauthor{\bsnm{Li}, \binits{X.}},
\bauthor{\bsnm{Zhang}, \binits{Z.}},
\bauthor{\bsnm{Wu}, \binits{W.}}:
\batitle{On construction of quality fault-tolerant virtual backbone in wireless
  networks}.
\bjtitle{IEEE/ACM Transactions on Networking}
\bvolume{21}(\bissue{5}),
\bfpage{1499}--\blpage{1510}
(\byear{2013}).
\doiurl{10.1109/TNET.2012.2227791}
\end{barticle}
\endbibitem

\bibitem{Shi2}
\begin{barticle}
\bauthor{\bsnm{Shi}, \binits{Y.}},
\bauthor{\bsnm{Zhang}, \binits{Z.}},
\bauthor{\bsnm{Mo}, \binits{Y.}},
\bauthor{\bsnm{Du}, \binits{D.-Z.}}:
\batitle{Approximation algorithm for minimum weight fault-tolerant virtual
  backbone in unit disk graphs}.
\bjtitle{IEEE/ACM Transactions on Networking}
\bvolume{25}(\bissue{2}),
\bfpage{925}--\blpage{933}
(\byear{2017}).
\doiurl{10.1109/TNET.2016.2607723}
\end{barticle}
\endbibitem

\bibitem{Fukunaga}
\begin{barticle}
\bauthor{\bsnm{Fukunaga}, \binits{T.}}:
\batitle{Approximation algorithms for highly connected multi-dominating sets in unit disk graphs}.
\bjtitle{Algorithmica}
\bvolume{80}(\bissue{11}),
\bfpage{3270}--\blpage{3292}
(\byear{2018}).
\doiurl{10.1007/s00453-017-0385-2}
\end{barticle}
\endbibitem

\bibitem{Nutov2}
\begin{barticle}
\bauthor{\bsnm{Nutov}, \binits{Z.}}:
\batitle{Approximating $k$-connected $m$-dominating set problems}.
\bjtitle{Algorithmica}
\bvolume{84}(\bissue{6}),
\bfpage{1511}--\blpage{1525}
(\byear{2022}).
\doiurl{10.1007/s00453-022-00935-x}
\end{barticle}
\endbibitem

\bibitem{Zou}
\begin{barticle}
\bauthor{\bsnm{Zou}, \binits{F.}},
\bauthor{\bsnm{Wang}, \binits{Y.}},
\bauthor{\bsnm{Xu}, \binits{X.-H.}},
\bauthor{\bsnm{Li}, \binits{X.}},
\bauthor{\bsnm{Du}, \binits{H.}},
\bauthor{\bsnm{Wan}, \binits{P.}},
\bauthor{\bsnm{Wu}, \binits{W.}}:
\batitle{New approximations for minimum-weighted dominating sets and
  minimum-weighted connected dominating sets on unit disk graphs}.
\bjtitle{Theoretical Computer Science}
\bvolume{412}(\bissue{3}),
\bfpage{198}--\blpage{208}
(\byear{2011}).
\doiurl{10.1016/j.tcs.2009.06.022}
\end{barticle}
\endbibitem

\bibitem{DuWan}
\begin{bbook}
\bauthor{\bsnm{Du}, \binits{D.-Z.}},
\bauthor{\bsnm{Wan}, \binits{P.-J.}}:
\bbtitle{Connected Dominating Set: Theory and Applications}.
\bsertitle{Springer Optimization and Its Applications, SOIA, volume 77}.
\bpublisher{Springer},
\blocation{New York}
(\byear{2012}).
\doiurl{10.1007/978-1-4614-5242-3}
\end{bbook}
\endbibitem

\bibitem{WZLD}
\begin{bbook}
\bauthor{\bsnm{Wu}, \binits{W.}},
\bauthor{\bsnm{Zhang}, \binits{Z.}},
\bauthor{\bsnm{Lee}, \binits{W.}},
\bauthor{\bsnm{Du}, \binits{D.-Z.}}:
\bbtitle{Optimal Coverage in Wireless Sensor Networks}.
\bsertitle{Springer Optimization and Its Applications, SOIA, volume 162}.
\bpublisher{Springer},
\blocation{New York}
(\byear{2020}).
\doiurl{10.1007/978-3-030-52824-9}
\end{bbook}
\endbibitem

\bibitem{ZHANG3}
\begin{barticle}
\bauthor{\bsnm{Zhang}, \binits{Z.}}:
\batitle{Survey of approximation algorithm on virtual backbone of wireless sensor network (chinese)}.
\bjtitle{Journal of Computer Research and Development}
\bvolume{53}(\bissue{1}),
\bfpage{15}--\blpage{25}
(\byear{2016}).
\doiurl{10.7544/issn1000-1239.2016.20.2015065}
\end{barticle}
\endbibitem

\end{thebibliography}
\end{document}